\theoremstyle{remark}               
\newtheorem*{theorem}{\bf Theorem}
\newcommand{\e}{\mathrm{e}} 
\renewcommand{\i}{\mathrm{i}} 
\renewcommand{\d}{\,\!\operatorname{d}\!}
\newcommand{\abs}[1]{\left| {}_{}^{} {#1}{}_{}^{} \right|}
\newcommand{\norm}[1]{\left\|{#1}\right\|}
\renewcommand{\vec}[1]{\boldsymbol{#1}}
\newcommand{\unitvec}[1]{\boldsymbol{\widehat{#1}}} 
\newcommand{\bigO}{\mathcal{O}}
\newcommand{\erf}{\operatorname{erf}} 
\renewcommand{\eqref}[1]{Eq.(\ref{#1})}
\begin{document}

\title{Compressive Imaging of Subwavelength Structures II. Periodic Rough Surfaces }
\author{Albert C. Fannjiang$^{*}$ and Hsiao-Chieh Tseng}
\address{Department of Mathematics, University of California, Davis, \\
One Shields Ave., University of California, 
Davis, CA 95616-8633, USA}
\address{$^*$Corresponding author: fannjiang@math.ucdavis.edu}


\date{\today}

\begin{abstract}
A compressed sensing scheme for near-field imaging of corrugations of relative sparse Fourier components is proposed.
The scheme employs  random sparse measurement  of near field
to recover the angular spectrum of the scattered field. 
It is shown heuristically and numerically that under the Rayleigh hypothesis
the angular spectrum is compressible and
amenable to compressed sensing techniques.  

Iteration schemes are developed for recovering the  surface profile from the angular spectrum.  
 The proposed nonlinear least squares in the Fourier basis produces
accurate reconstructions even when the Rayleigh hypothesis
is known to be false.

\end{abstract}

\maketitle

  \section{Introduction}
    Rough surface scattering is of fundamental interest in optics,
  radiowave propagation
  and acoustics \cite{Bec, BF, U65} and forms the basis of
  near-field imaging which is the operation principle
  behind such instruments as scanning near-field optical microscopy \cite{AN, Poh, Lew, PDL} and near field acoustic microscopy \cite{KY91}.  Near-field imaging is a microscopic technique that breaks
  the  diffraction limit by exploiting the properties of evanescent waves.
  The signal  is collected  by placing the detector in a
  distance much smaller than wavelength $\lambda$ to the specimen surface. An image of the surface is obtained by mechanically moving the probe in a raster scan of the specimen, line by line, and recording the probe-surface interaction as a function of position.  This leads
  to  long scan times for large sample areas or high resolution imaging.

  Typically near-field imaging is analyzed by assuming a continuum or
  dense set of data points \cite{GV93,  NG81, RGC81}.
  In the present work, we focus on the setting of  {\em sparse, discrete }  measurement of near-field
  from the perspective of  compressed sensing theory. This is an extension of the work \cite{subwave} on potential scattering to the case of rough surface scattering. Surface scattering involves the geometry (i.e. topography)  of scatterers and is technically more challenging to deal with  than
  potential scattering.

 Consider the scattering problem
  for a corrugation profile described by the function $z=h(x)$. For simplicity of presentation, we will focus on the case of two-dimensional  scalar wave with the Dirichlet boundary condition.
  The total field $u^{\text{tot}}$ satisfies
   \begin{eqnarray}
   \Delta u^{\text{tot}} + k^2 u^{\text{tot}} \ = \ 0  &\qquad \text{ in } \Omega\subset\mathbb{R}^2 \ , \quad k>0 \\
   u^{\text{tot}} \ = \ 0 &\qquad \text{ on } \partial \Omega,
  \end{eqnarray}
  where
  \begin{eqnarray}
  \Omega = \{\vec{r}=(x,z)\in\mathbb{R}^2\colon z>h(x)\}, \quad   h \in C(\mathbb{R})\cap L^\infty(\mathbb{R}) .
  \end{eqnarray}

    The total field models the sound pressure wave or
  electromagnetic waves in the TE-mode. The Dirichlet boundary condition
corresponds to  the sound-soft boundary condition in acoustics  and in electromagnetism the perfectly conducting boundary condition.
Our approach can be easily
  extended to the three dimensional case as well as to
 the Neumann boundary condition, corresponding to acoustically hard obstacles, and the Robin boundary condition.


  As usual in scattering problem, we write   $u^{\text{tot}}=u^{\text{inc}}+u$
  where both the scattered wave $u$ and the incident
  wave $u^{\text{inc}}$
  satisfy the Helmholtz equation. The Dirichlet  condition becomes $u=-u^{\text{inc}}$ on $\partial \Omega$.

  In this  paper, we focus on the case of  periodic surfaces which 
  include diffraction gratings, an important class ofoptical elements. 
We  assume that $h$ has period $L$
  and $u^{\text{inc}}$ is the plane incident wave
  \begin{eqnarray}u^{\text{inc}}(\vec{r}) = \e^{\i k \unitvec{\mathbf d}\cdot \vec{r}}
  =\e^{\i k(x\cos\theta-z\sin\theta)}\ , \quad
  \unitvec{\mathbf d} = (\cos\theta,-\sin\theta)
  \ , \quad 0 < \theta < \pi .
  \end{eqnarray}

Observe that on the boundary $z=h(x)$
  \begin{eqnarray}u(x+L,h(x)) = -\e^{\i k \big((x+L)\cos\theta-h(x)\sin\theta\big)}
  = \e^{\i k L\cos\theta} u(x,h(x)).\end{eqnarray}
Hence we look for  the  $(L,k\cos\theta)$-quasi-periodic (or Floquet periodic) solution satisfying
  \begin{eqnarray}u(x+nL,z) = \e^{\i nL k \cos\theta} u(x,z)
  \ \text{ for all } \ (x,z)\in\Omega ,\ n\in\mathbb{Z}.\end{eqnarray}
In particular,  if $\theta=\pi/2$, then $u$ is $L$-periodic. To fix the idea, we set $L=2\pi$.

  \begin{figure}[h]\centering

\ifx\JPicScale\undefined\def\JPicScale{0.8}\fi
\unitlength \JPicScale mm
\begin{picture}(120,80)(0,0)
\linethickness{0.3mm}
\qbezier(10,10)(21.25,13.75)(28.75,13.75)
\qbezier(28.75,13.75)(36.25,13.75)(40,10)
\qbezier(40,10)(43.75,6.25)(47.5,6.25)
\qbezier(47.5,6.25)(51.25,6.25)(55,10)
\qbezier(55,10)(58.44,13.59)(60,14.38)
\qbezier(60,14.38)(61.56,15.16)(65,15)
\qbezier(65,15)(68.75,14.38)(72.5,10)
\qbezier(72.5,10)(76.25,5.62)(80,5)
\qbezier(80,5)(82.81,4.84)(83.75,5.62)
\qbezier(83.75,5.62)(84.69,6.41)(95,10)
\qbezier(95,10)(106.41,13.53)(113.12,13.88)
\qbezier(113.12,13.88)(119.84,14.22)(120,14)
\linethickness{0.3mm}
\multiput(40,70)(0.12,-0.18){167}{\line(0,-1){0.18}}
\put(60,40){\vector(2,-3){0.12}}
\linethickness{0.3mm}
\multiput(50,40)(0.18,0.12){83}{\line(1,0){0.18}}
\linethickness{0.3mm}
\multiput(48,42)(0.18,0.12){83}{\line(1,0){0.18}}
\linethickness{0.3mm}
\multiput(46,44)(0.18,0.12){83}{\line(1,0){0.18}}
\linethickness{0.3mm}
\multiput(45,47)(0.18,0.12){83}{\line(1,0){0.18}}
\linethickness{0.3mm}
\multiput(43,49)(0.18,0.12){83}{\line(1,0){0.18}}
\linethickness{0.1mm}
\multiput(35,65)(1.94,0){16}{\line(1,0){0.97}}
\linethickness{0.1mm}
\qbezier(52,65)(51.75,60.69)(50,59.75)
\qbezier(50,59.75)(48.25,58.81)(48,59)
\put(48,60){\vector(0,0){0.12}}
\linethickness{0.3mm}
\put(15,10){\line(1,0){100}}
\put(115,10){\vector(1,0){0.12}}
\linethickness{0.3mm}
\put(20,5){\line(0,1){70}}
\put(20,75){\vector(0,1){0.12}}
\linethickness{0.3mm}
\put(110,8){\line(0,1){4}}
\put(108,3){$L$}

\put(21,5){$0$}

\put(116,9){$x$}

\put(19,76){$z$}

\put(48,61){$\theta$}

\put(60,18){$h$}

\put(86,56){$\Omega$}

\end{picture} \\
   \caption{Surface topography.}
  \end{figure}
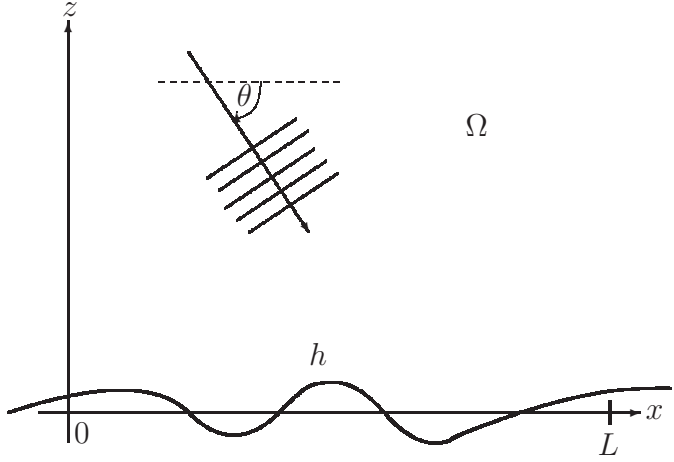
  \section{Radiation condition and Rayleigh hypothesis}
    The existence and uniqueness can be proved under the quasi-periodicity
  and the radiation conditions  on the solution $u$ \cite{K93}.
The well-posedness for  general nonperiodic  rough surfaces is given in  \cite{AH05}. Below we discuss the Fourier representation of
the scattered field and the associated Rayleigh hypothesis.

 For $x\in[-\pi,\pi)$,
  $z > \sup{h} \triangleq h_{\max}$, we write  the scattered field
 as the Fourier series
  \begin{eqnarray}
  u(x,z) = \sum_{n\in\mathbb{Z}} u_n(z)\e^{\i(n+k\cos\theta)x} = \sum_{n\in\mathbb{Z}}u_n(z)\e^{\i k\alpha_n x}
  \end{eqnarray}
  with
   \begin{eqnarray}
  \label{1.0}
  \alpha_n &= &\frac{n}{k}+\cos\theta
  \end{eqnarray}
  where
 $u_n$ satisfies
  \begin{eqnarray}
  \label{1.1}
  \ddot{u}_n + k^2(1-\alpha_n^2)u_n = 0.
  \end{eqnarray}
Solving \eqref{1.1} and imposing the boundedness of $u_n$ at $z=\infty$   we obtain the general solution as
\begin{eqnarray}
    u(x,z) =
    & \sum_{\abs{\alpha_n}\leq 1}a_n\e^{\i k(\alpha_n x - \beta_n z)}   & \text{ (incoming waves) }  \nonumber \\
    & + \sum_{\abs{\alpha_n}\leq 1}b_n\e^{\i k(\alpha_n x + \beta_n z)} & \text{ (outgoing waves) }  \nonumber \\
    & + \sum_{\abs{\alpha_n} > 1}c_n\e^{\i k(\alpha_n x + \beta_n z)}   & \text{ (evanescent waves) } \label{1.2}
\end{eqnarray}
where  $\beta_n$ is given by
\begin{eqnarray}
 \label{1.3}
 \beta_n = \begin{cases}
  \sqrt{1-\alpha_n^2},  & \abs{\alpha_n}\leq 1 \\ \i\sqrt{\alpha_n^2-1},  & \abs{\alpha_n}>1.  \end{cases}\label{defab}
  \end{eqnarray}

  The
   \emph{Rayleigh radiation condition} for the region above
   the grooves  $z>h_{\max}$ amounts to
   dropping  the incoming waves in \eqref{1.2}:
  \begin{eqnarray}
  \label{1.4}
  u(x,z) = \sum_{n\in\mathbb{Z}} u_n\e^{\i k(\alpha_n x+\beta_n z)}
  \ ,\ z> h_{\max}.
  \end{eqnarray}
 However, in the region inside the grooves
 $z<h_{\max}$  multiple scattering may occur
  and \eqref{1.4} may not  represent the  true
  scattered  wave
  in this region.   For shallow  corrugations, \eqref{1.4}
  should hold in the grooves and this is the Rayleigh hypothesis.
  For instance, the  Rayleigh hypothesis holds  for the sinusoidal  profile
$h(x) = b\sin(ax)$ with  $\abs{ab}<0.448$ \cite{M69,M71,U65}.
  On the other hand, for a  general periodic surface the validity
  of the Rayleigh hypothesis may be difficult to assess \cite{Kel}.
  The failure of the Rayleigh hypothesis \eqref{1.4}  manifests in
  the breaking down of the analytic continuation of \eqref{1.4}
  inside grooves.


%

  \section{Inverse scattering formulation}
  Inverse scattering seeks to reconstruct $h(x)$ by transmitting
  incident wave $u^{\text{inc}}$ and measuring the
  scattered field $u$ at certain locations. Moreover, in order to resolve subwavelength structure which is hidden in
  the evanescent waves the measurement should be carried out in the near-field.

  Due to the quasi-periodicity, we may consider the scattered field $u$
  in  the union of
  \begin{eqnarray} \Omega_\Gamma\triangleq\{(x,z)\in \Omega,\ x\in [-\pi,\pi)\}  \end{eqnarray}
  and
  \begin{eqnarray} \Gamma\triangleq \{(x,z)\in\partial \Omega\colon x\in[-\pi,\pi)\}.
  \end{eqnarray}
  For $z>h(x)$ we have the outgoing  scattered wave representation \eqref{1.4} with \cite{ACD06}
  \begin{eqnarray}
   u_n = \frac{\i}{4\pi k\beta_n }\int_{-\pi}^{\pi} \e^{-\i k\big(\alpha_n x' + \beta_nh(x')\big)}
   \left(-\frac{\partial  u^{\text{tot}}(\vec{r}')}{\partial \nu'}\Big|_{\vec{r}'\in\Gamma} \right)
   \sqrt{1+\dot{h}^2(x')}\d x' .\label{1.15}
  \end{eqnarray}
  To ensure $\beta_n\neq 0$ in \eqref{1.15}, we assume  that $\abs{\alpha_n}\neq 1$,
  i.e.,
  \begin{eqnarray}
  \abs{\cos\theta+\frac{n}{k}}\neq 1,\quad n\in\mathbb{Z}
    \label{1.16}
  \end{eqnarray}
  to avoid all grazing modes. In the case of normal incidence
  $\theta=\pi/2$, \eqref{1.16} means that the wavenumber $k$ is
  not an integer.

A key assumption for our approach is that $h$ has a {\em small number}  of
significant  Fourier coefficients, namely the Fourier coefficients are
sparse or compressible.
Writing $h(x) = \sum_{n\in\mathbb{Z}} \widehat{h}_n\e^{\i n x}$, we say that $\widehat{\vec{h}}=\{\widehat{h}_n\}$
  is $s$-sparse if $\|\widehat{\vec{h}}\|_0$, the number of nonzero elements of $\widehat{\vec{h}}$, is less
  or equal to a small integer $s$. Note that ${\widehat{h}^*}_{-n} = \widehat{h}_n$ since $h$ is real-valued.
  Without loss of generality, we assume that $\widehat{h}_0=0$.

For reconstruction of  $h(x)$ we  utilize the sparsity of $\widehat{\vec{h}}$ which surprisingly  yields
  compressibility of the scattering amplitude $u_n$. Compressive sensing techniques can then used to effectively
  recover those modes.

Let $(x_j, z_0) $, $j=1,2,\ldots,m$, be the sensor  locations  for
 measuring the scattered field where   $z_0> h_{\max}$ is fixed and
  $x_j$
are randomly and  independently
 chosen from $[-\pi,\pi)$ according to the uniform distribution.

  In view of the identify
  \begin{eqnarray}u(x_j,z_0) \e^{-\i k\cos\theta x_j} =
  \sum_{n\in\mathbb{Z}}\e^{\i n x_j} u_n\e^{\i k \beta_n z_0}\end{eqnarray}
  from \eqref{1.4},
let us  consider the following inverse problem
 $Y = {\mathbf{A}} X$, with entries
  \begin{eqnarray}  X_n &= &u_n \e^{\i k \beta_n z_0}\sqrt{m} \label{1.7}\\
Y_j &=& u(x_j,z_0)\e^{-\i k x_j \cos\theta}\label{1.8}\\
  {\mathbf{A}} &=& [A_{j,n}] = \frac{1}{\sqrt{m}}\e^{\i n x_j } \label{CSsystem}\label{1.9}
  \end{eqnarray}
  where $n$ is restricted to a finite, but sufficiently large
  interval
  ranged from $-N/2$ to $N/2-1$.
  In general the system \eqref{1.7}-\eqref{1.9} is highly
  underdetermined for any $m<\infty$.

Surprisingly,  sparse Fourier coefficients $\{\widehat h_n\}$ give rise to  sparse or compressible $\{u_n\}$ and therefore $X$
which can be reconstructed by   compressed sensing.

  \section{Compressive sensing (CS)}
The main thrust of compressed sensing  \cite{CT, Don} is to convert the noisy underdetermined system
\begin{eqnarray}
\label{1.17}
Y={\mathbf{A}} X+E
\end{eqnarray}
 into the $L^1$-based optimization problem
\begin{eqnarray}
   \label{1.10}
   \min\norm{X}_1 \ \text{ subject to } \norm{Y-{\mathbf{A}} X}_2\leq\epsilon \triangleq \|E\|_2
\end{eqnarray}
where $E$ is the external noise vector. \eqref{1.10} is called the Basis Pursuit (BP) \cite{CDS}. In addition to quadratic programming, 
many iterative and greedy  algorithms are available for solving the system \eqref{1.17}.

 Let us first review a basic notion in CS which provides a performance guarantee for BP.
  We say a matrix ${\mathbf{A}}\in\mathbb{C}^{m\times N}$ satisfies the restricted isometry property  (RIP)
  if
  \begin{eqnarray}
  \label{1.11} (1-\delta)\norm{Z}_2^2 \leq 
  \norm{{\mathbf{A}} Z}_2^2\leq (1+\delta)\norm{Z}_2^2,\quad \delta\in (0,1)
  \end{eqnarray}
  holds for all $s$-sparse $Z \in\mathbb{C}^{N}$.
  The smallest constant  satisfying \eqref{1.11} is called
  the restricted isometry constant (RIC) of order $s$  and
  denoted by $\delta_s$.

  The following theorem says that the random Fourier matrix satisfies RIP if $m$ is sufficiently large.
  \begin{theorem}
   \cite{R08}  Let $\xi_j\in[0,1]$, $j=1,2,\ldots,m$ be independent uniform random variables. If
   \begin{eqnarray}\frac{m}{\ln m} \geq C\delta^{-2}s\ln^2 s\ln N\ln\frac{1}{\eta}\ , \quad \eta\in(0,1)\end{eqnarray}
   for some universal constant $C$ and sparsity level $s$, then the restricted isometry constant of the random
   Fourier measurement matrix with
   \begin{eqnarray}
   A_{nj} =\frac{1}{\sqrt{m}}\e^{2\pi\i n \xi_j}, n=-N/2,\ldots,N/2-1,
   \label{rfm}
   \end{eqnarray}
    satisfies
   $\delta_s\leq \delta$ with probability at least $1-\eta$.
  \end{theorem}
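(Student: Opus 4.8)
The plan is to recast the restricted isometry constant as the supremum of a centered second-order chaos and then control its expectation by chaining and its fluctuations by a concentration inequality. First I would observe that, by the orthonormality relation $\int_0^1\e^{2\pi\i\ell\xi}\d\xi=\delta_{\ell,0}$ for integer $\ell$, the columns of $\mathbf{A}$ form a bounded orthonormal system: every entry has modulus $1/\sqrt m$ and $\mathbb{E}[\mathbf{A}^*\mathbf{A}]=I_N$. Writing $\vec a_j$ for the $j$-th row, the quantity inside the RIP is
\begin{equation*}
\norm{\mathbf{A}Z}_2^2-\norm{Z}_2^2=\sum_{j=1}^m\Big(\abs{\langle\vec a_j,Z\rangle}^2-\mathbb{E}\abs{\langle\vec a_j,Z\rangle}^2\Big),
\end{equation*}
so that, restricting to the set $D_s$ of $s$-sparse unit vectors,
\begin{equation*}
\delta_s=\sup_{Z\in D_s}\abs{\,\norm{\mathbf{A}Z}_2^2-\norm{Z}_2^2\,}=\max_{\abs{S}\leq s}\norm{\mathbf{A}_S^*\mathbf{A}_S-I_S},
\end{equation*}
the maximum of the operator norms of the centered Gram matrices over all support sets $S$. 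This reduces everything to bounding the supremum of a centered empirical process indexed by sparse vectors.

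Next I would bound $\mathbb{E}\delta_s$. Symmetrizing with independent Rademacher signs $\epsilon_j$ gives $\mathbb{E}\delta_s\leq 2\,\mathbb{E}\sup_{Z\in D_s}\big|\sum_j\epsilon_j\abs{\langle\vec a_j,Z\rangle}^2\big|$, and conditionally on the $\vec a_j$ this is a Rademacher process whose expected supremum I would estimate by Dudley's entropy integral with respect to the pseudometric $d(Z,Z')=\big(\sum_j(\abs{\langle\vec a_j,Z\rangle}^2-\abs{\langle\vec a_j,Z'\rangle}^2)^2\big)^{1/2}$. The boundedness of the entries, $\abs{\langle\vec a_j,Z\rangle}\leq\sqrt{s/m}$, lets me dominate this metric by the $\ell_2$ metric scaled by the current size of the process, which is itself governed by $\sqrt{\delta_s+1}$. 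The core of the argument is the covering-number estimate for $D_s$: combining the volumetric bound over an $s$-dimensional ball and the union over $\binom{N}{s}$ support sets (giving the $s\ln N$ and $s\ln s$ contributions) with an empirical/Maurey argument in the manner of Rudelson--Vershynin for the small-scale entropy yields an entropy integral of order $\sqrt{(s/m)\,\ln^2 s\,\ln N}\cdot\sqrt{\delta_s+1}$. Because the right-hand side again contains $\delta_s$, I would close the estimate with a fixed-point (bootstrap) argument, solving the resulting quadratic inequality to obtain $\mathbb{E}\delta_s\lesssim\delta$ precisely when $m/\ln m\gtrsim\delta^{-2}s\ln^2 s\ln N$.

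Finally I would upgrade the in-expectation bound to the high-probability statement. Since $\delta_s$ is the supremum of a sum of independent, bounded, self-adjoint random matrices (equivalently a bounded empirical process), I would invoke a concentration inequality---either Talagrand's inequality for suprema of empirical processes or a matrix Bernstein/Bousquet deviation bound---to show that $\delta_s$ does not exceed $\mathbb{E}\delta_s$ plus a deviation term scaling like $\sqrt{\ln(1/\eta)}$ times the weak variance of the process, with probability at least $1-\eta$; this is where the $\ln(1/\eta)$ factor in the hypothesis enters. The main obstacle is the entropy/chaining step together with the self-referential bootstrap: controlling the covering numbers of the sparse set in the data-dependent metric and then disentangling the appearance of $\delta_s$ on both sides of the inequality is what drives both the delicate $\ln^2 s$ power and the need for a careful fixed-point closure. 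By comparison, the symmetrization and concentration steps are fairly routine.
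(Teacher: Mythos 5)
This theorem is not proved in the paper at all: it is imported verbatim from Rauhut \cite{R08}, so there is no in-paper argument to compare against, and your proposal has to be judged against the proof in that cited source. Your outline is essentially that proof (which follows the Rudelson--Vershynin strategy): identify the columns as a bounded orthonormal system, write $\delta_s=\max_{|S|\leq s}\|\mathbf{A}_S^*\mathbf{A}_S-I_S\|$ as the supremum of a centered quadratic empirical process over $s$-sparse unit vectors, symmetrize, control the conditional Rademacher supremum by Dudley's entropy integral, estimate covering numbers by the two-regime argument (volumetric plus union over supports at small scales, Maurey's empirical method at large scales, which is where $\ln^2 s\,\ln N$ originates), and close the self-referential inequality by a fixed-point step, so your plan is correct and matches the source in all structural respects. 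The one genuine divergence is the final high-probability step: Rauhut's paper does not invoke Talagrand/Bousquet concentration but instead derives the tail bound from moment estimates for the chaos process (iterating the expectation bound over $L^p$ norms and applying Markov's inequality), which is where the multiplicative $\ln(1/\eta)$ and the $m/\ln m$ normalization in the stated hypothesis actually come from; your concentration route is the one taken in later expositions and also works, but it yields the $\ln(1/\eta)$ dependence in a slightly different form, so you would need to check that the resulting sample-size condition can be massaged into the stated one. Finally, be aware that your text is a plan rather than a proof: the Maurey covering lemma, the entropy integral computation, and the bootstrap are named but not executed, and they are precisely the technically delicate parts.
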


 Denote $X_s$ to be the
  best $s$-term approximation of the solution $X$, and let $\tilde{X}$ be the solution of BP \eqref{1.10}.
  \begin{theorem}
   \cite{C08} Let ${\mathbf{A}}$ satisfy the RIP with
   \begin{eqnarray}
   \delta_{2s}<\sqrt{2}-1
   \end{eqnarray}
    and $\tilde X$ be the solution to BP.
   Then
   \begin{eqnarray}\norm{\tilde{X}-X}_2\leq C_0 \frac{1}{\sqrt{s}}\norm{X_s-X}_1 + C_1\epsilon
   \ , \quad \norm{\tilde{X}-X}_1\leq C_0 \norm{X_s-X}_1 + C_1\epsilon\end{eqnarray}
   for some constants $C_0,C_1$ independent of $X$.
  \end{theorem}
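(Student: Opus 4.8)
The plan is to follow the standard RIP argument working throughout with the error vector $\vec{h}=\tilde X-X$ and exploiting two ingredients: a \emph{cone constraint} from $\ell^1$-optimality and a \emph{tube constraint} from data fidelity. First I would note that, since $\norm{Y-{\mathbf{A}}X}_2=\norm{E}_2\leq\epsilon$, the true signal $X$ is feasible for the Basis Pursuit program \eqref{1.10}, so minimality of $\tilde X$ gives $\norm{\tilde X}_1\leq\norm{X}_1$. Letting $T_0$ be the support of the best $s$-term approximation $X_s$ and splitting the $\ell^1$ norm over $T_0$ and $T_0^c$, the reverse triangle inequality converts this into the cone constraint
\begin{equation*}
\norm{\vec{h}_{T_0^c}}_1 \;\leq\; \norm{\vec{h}_{T_0}}_1 + 2\norm{X-X_s}_1 .
\end{equation*}

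Next I would partition $T_0^c$ into consecutive blocks $T_1,T_2,\ldots$ of size $s$, ordered by decreasing magnitude of the entries of $\vec h$, and set $T_{01}=T_0\cup T_1$. The standard ``shelling'' estimate---each entry on $T_j$ is dominated by the average magnitude on $T_{j-1}$---yields $\sum_{j\geq2}\norm{\vec h_{T_j}}_2\leq s^{-1/2}\norm{\vec h_{T_0^c}}_1$, which together with the cone constraint bounds the tail $\norm{\vec h_{T_{01}^c}}_2$ by $\norm{\vec h_{T_{01}}}_2$ plus a multiple of $s^{-1/2}\norm{X-X_s}_1$. The remaining task is therefore to control the head $\norm{\vec h_{T_{01}}}_2$.

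For that I would invoke the tube constraint. Since both $\tilde X$ and $X$ satisfy the residual bound in \eqref{1.10}, the triangle inequality gives $\norm{{\mathbf{A}}\vec h}_2\leq2\epsilon$. I then expand $\norm{{\mathbf{A}}\vec h_{T_{01}}}_2^2=\langle {\mathbf{A}}\vec h_{T_{01}},{\mathbf{A}}\vec h\rangle-\sum_{j\geq2}\langle {\mathbf{A}}\vec h_{T_{01}},{\mathbf{A}}\vec h_{T_j}\rangle$, bounding the first inner product by Cauchy--Schwarz and the RIP upper bound, and each cross term by the near-orthogonality lemma $\abs{\langle {\mathbf{A}}u,{\mathbf{A}}v\rangle}\leq\delta_{2s}\norm{u}_2\norm{v}_2$, valid for $u,v$ with disjoint supports of combined size at most $2s$. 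Applying the RIP lower bound on the left produces a self-referential inequality of the form
\begin{equation*}
(1-\delta_{2s})\norm{\vec h_{T_{01}}}_2 \;\leq\; 2\sqrt{1+\delta_{2s}}\,\epsilon + \sqrt{2}\,\delta_{2s}\Big(\norm{\vec h_{T_{01}}}_2 + 2s^{-1/2}\norm{X-X_s}_1\Big).
\end{equation*}

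The crux---and the only place the hypothesis enters---is the final rearrangement: collecting the $\norm{\vec h_{T_{01}}}_2$ terms produces the coefficient $1-(1+\sqrt2)\delta_{2s}$, which is positive precisely when $\delta_{2s}<\sqrt2-1$. Under that assumption I solve for $\norm{\vec h_{T_{01}}}_2\leq C_1\epsilon+C_0 s^{-1/2}\norm{X-X_s}_1$ with constants depending only on $\delta_{2s}$; the $\ell^2$ estimate then follows from $\norm{\vec h}_2\leq\norm{\vec h_{T_{01}}}_2+\norm{\vec h_{T_{01}^c}}_2$ and the tail bound, and the $\ell^1$ estimate from the cone constraint together with $\norm{\vec h_{T_0}}_1\leq\sqrt s\,\norm{\vec h_{T_{01}}}_2$. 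I expect the bookkeeping in the cross-term sum to be the main obstacle: one must split $T_{01}$ back into $T_0$ and $T_1$ so that each inner product pairs disjoint supports of total size exactly $2s$ (which is what licenses the use of $\delta_{2s}$ and introduces the $\sqrt2$ factor), and it is this arithmetic that ultimately pins down the $\sqrt2-1$ threshold.
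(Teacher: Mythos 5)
Your proposal is correct, but note that the paper itself offers no proof of this statement: it is quoted verbatim from Cand\`es \cite{C08}, and the burden of proof lives in that reference. What you have written is a faithful reconstruction of exactly that argument---feasibility of $X$ giving the cone constraint, the size-$s$ block decomposition with the shelling bound $\sum_{j\geq 2}\norm{\vec h_{T_j}}_2\leq s^{-1/2}\norm{\vec h_{T_0^c}}_1$, the tube constraint $\norm{{\mathbf{A}}\vec h}_2\leq 2\epsilon$, the near-orthogonality lemma applied after splitting $T_{01}$ into $T_0$ and $T_1$ (which is indeed what produces the $\sqrt 2$ factor), and the final rearrangement whose coefficient $1-(1+\sqrt2)\delta_{2s}$ pins down the threshold $\delta_{2s}<\sqrt2-1$---so it is the same approach as the cited source, correctly executed. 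The only cosmetic discrepancy is in the $\ell^1$ bound, where your route yields a term $C_1\sqrt{s}\,\epsilon$ rather than $C_1\epsilon$; since $s$ is fixed and the constants are only required to be independent of $X$, this is harmless.
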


Once the estimate $\tilde X$ is obtained from BP, we
reconstruct $u_n$ by
  \begin{eqnarray}
  \tilde{u_n} = \frac{1}{\sqrt{m}} \e^{-\i k \beta_n z_0} \tilde X_n \label{Xtoun}.
  \end{eqnarray}
 The problem with \eqref{Xtoun} is
 that the evanescent modes  yield exponentially
 large factor $\e^{-\i k\beta_n z_0}$
for
  \begin{eqnarray}
  \label{1.12}
  \abs{\alpha_n}=\abs{\cos\theta+\frac{n}{k}}>1.
  \end{eqnarray}
 For $n$ sufficiently large, this can magnify the error
 in $\tilde X_n$ and produce undesirable result in
 $\tilde{u_n}$. This observation also shows
 that $X$ may be much more compressible than
 $\{u_n\}$.

 A simple remedy would be to apply the hard thresholding by restricting the identity \eqref{Xtoun} up to
 $n_0$ sufficiently small and setting  the rest of $\tilde{u_n}$
 zero for $\abs{n}>n_0$. Let us now give a rough estimate for the number
 of modes that should be preserved  by the hard   thresholding rule.

We define
  the stably recoverable evanescent modes
  to be those modes satisfying  \eqref{1.12} and
  \begin{eqnarray}k\abs{\beta_n}z_0\leq C_e  \end{eqnarray}
  for some constant $C_e$ (in \cite{subwave}, $C_e=2\pi$).
 On the other hand,
  \begin{eqnarray}
   \beta_n = \sqrt{\cos^2\theta+2{\frac{n}{k}} \cos\theta+\frac{n^2}{k^2}-1} \geq\frac{\abs{n}}{k}-1 .\end{eqnarray}
  Hence  the stably recoverable modes necessarily satisfy
  \begin{eqnarray}k \left(\frac{\abs{n}}{k}-1\right)  z_0\leq k\abs{\beta_n}z_0\leq C_e \end{eqnarray} or equivalently
  \begin{eqnarray}
   \abs{n}\leq n_0\triangleq \frac{C_e}{z_0}+k
  \end{eqnarray}
 which is a rough characterization of the stably recoverable (evanescent)
 modes.   We see that  $n_0$ increases as $k$ increases or $z_0$ small.

  Summing up the previous analysis we conclude the recoverability of the scattering
 amplitude $\{u_n\}$ by the following theorem:
  \begin{theorem}
   Let $z_0>h_{\max}$ be fixed and let $x_j$, $j=1,2,\ldots,m$ be i.i.d  uniform random variables   in $[-\pi,\pi)$.
   Let $n_0=\frac{C_e }{z_0}+k$ for some positive constant $C_e>0$.
  Let $\tilde X$, $X_s$ be the BP solution and the best $s$-term approximated solution of the system \eqref{1.17}
  respectively, and assume
   \begin{eqnarray}\frac{m}{\ln m} \geq C\frac{1}{\delta^2}s\ln^2 s \ln N \ln\frac{1}{\eta}\end{eqnarray}
   for some universal constant $C$ and $\eta\in (0,1)$.
   Let $\vec{u}=(u_{-n_0},\ldots,u_{n_0})$,
   $\tilde{\vec{u}}=(\tilde{u_{-n_0}},\ldots,\tilde{u_{n_0}})$
   where $\tilde{u_n}$  is given by \eqref{Xtoun}.
   Then one can reconstruct
   the solution $\tilde{\vec{u}}$ with
   \begin{eqnarray}
    \norm{\tilde{\vec{u}}- \vec{u} }_2
    \leq \frac{\e^{C_e}}{\sqrt{m}}\left( C_0 \frac{1}{\sqrt{s}}\norm{X_s-X}_1 + C_1\epsilon  \right)
    \label{thm}
   \end{eqnarray}
   for some constants $C_0,C_1$
   with probability at least $1-\eta$.
  \end{theorem}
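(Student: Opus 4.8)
The plan is to obtain \eqref{thm} as a direct corollary of the two cited compressed sensing results together with the explicit diagonal change of variables $X_n=\sqrt{m}\,\e^{\i k\beta_n z_0}u_n$ relating the recovered vector to the scattering amplitudes, using the hard-thresholding cutoff $n_0$ to control the otherwise unbounded exponential factors. The argument proceeds in three steps: establish RIP for $\mathbf{A}$, invoke the Basis Pursuit error bound for $\tilde{X}$, and transfer that $\ell^2$ bound from $X$-space to $\vec{u}$-space.

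First I would verify that $\mathbf{A}=[A_{j,n}]=m^{-1/2}\e^{\i n x_j}$ meets the hypotheses of the RIP theorem of \cite{R08}. Under the affine substitution $\xi_j=(x_j+\pi)/(2\pi)$ the $x_j$, being i.i.d.\ uniform on $[-\pi,\pi)$, map to i.i.d.\ uniform variables on $[0,1)$, and $\e^{\i n x_j}=(-1)^n\e^{2\pi\i n\xi_j}$; the extra column signs $(-1)^n$ form a diagonal unitary and hence leave every restricted isometry constant unchanged, so $\mathbf{A}$ and the matrix \eqref{rfm} share the same RIC. The stated lower bound on $m/\ln m$ (applied at sparsity level $2s$, which only alters the universal constant $C$) then gives $\delta_{2s}\leq\delta$ with probability at least $1-\eta$; choosing $\delta<\sqrt{2}-1$ puts us in the regime required by \cite{C08}.

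Next I would apply \cite{C08} to the Basis Pursuit solution $\tilde{X}$ of $Y=\mathbf{A}X+E$ with $\norm{E}_2=\epsilon$, which gives immediately $\norm{\tilde{X}-X}_2\leq C_0 s^{-1/2}\norm{X_s-X}_1+C_1\epsilon$. The final step rewrites the left-hand side of \eqref{thm}: from \eqref{1.7} and \eqref{Xtoun}, $\tilde{u_n}-u_n=m^{-1/2}\e^{-\i k\beta_n z_0}(\tilde{X}_n-X_n)$, so
\begin{eqnarray}
\norm{\tilde{\vec{u}}-\vec{u}}_2^2 &=& \frac{1}{m}\sum_{\abs{n}\leq n_0}\abs{\e^{-\i k\beta_n z_0}}^2\,\abs{\tilde{X}_n-X_n}^2.
\end{eqnarray}
The crucial point is that the cutoff $\abs{n}\leq n_0$ is precisely what bounds the prefactor: for propagating modes $\beta_n$ is real so $\abs{\e^{-\i k\beta_n z_0}}=1$, whereas for evanescent modes $\beta_n=\i\abs{\beta_n}$ yields $\abs{\e^{-\i k\beta_n z_0}}=\e^{k\abs{\beta_n}z_0}\leq\e^{C_e}$ by the definition of the stably recoverable modes. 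Thus $\abs{\e^{-\i k\beta_n z_0}}\leq\e^{C_e}$ uniformly over $\abs{n}\leq n_0$, and bounding the truncated sum by the full $\ell^2$ norm gives $\norm{\tilde{\vec{u}}-\vec{u}}_2\leq m^{-1/2}\e^{C_e}\norm{\tilde{X}-X}_2$; substituting the preceding estimate yields \eqref{thm}.

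The proof is short because all the probabilistic and optimization-theoretic content is imported wholesale from \cite{R08} and \cite{C08}; the only genuinely new ingredient is the diagonal-scaling transfer. The one place demanding care --- and the reason the truncation cannot be dispensed with --- is that the factors $\e^{-\i k\beta_n z_0}$ are \emph{not} uniformly bounded in $n$: for large evanescent indices they blow up exponentially, so without restricting to $\abs{n}\leq n_0$ the error in the high-frequency components of $\tilde{X}$ would be catastrophically amplified. The cutoff $n_0=C_e/z_0+k$ is engineered so that $\e^{C_e}$ is exactly the worst-case amplification, which is why it surfaces as the prefactor in \eqref{thm}.
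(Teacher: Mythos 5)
Your proposal is correct and follows essentially the same route as the paper's own proof: reduce $\mathbf{A}$ to the standard random Fourier matrix of \cite{R08} (the $(-1)^n$ column signs being harmless), invoke the Basis Pursuit error bound of \cite{C08}, and transfer the resulting $\ell^2$ estimate to $\vec{u}$-space via the bound $\abs{\e^{-\i k \beta_n z_0}} \leq \e^{C_e}$ valid on the truncated index range $\abs{n}\leq n_0$. Your treatment of the sign factors as a norm- and RIC-preserving diagonal unitary, and your explicit attention to needing $\delta_{2s}<\sqrt{2}-1$ (hence applying the RIP theorem at sparsity level $2s$), are minor refinements of the identical argument.
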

\begin{proof}
  Without loss of generality,
  we prove for the case that $x_j$, $j=1,2,\ldots,m$, are i.i.d  uniform random variables in $[0,2\pi)$,
  and consequently the matrix $A$, defined in \eqref{1.17}, is the random Fourier measurement \eqref{rfm} where $\xi_j$ are i.i.d
  uniform random variables in $[0,1)$. It is equivalent to the case where
  $x_j$ are i.i.d  uniform random variables in $[-\pi,\pi)$: One
  can write $x_j=2\pi (\xi_j - \frac{1}{2})$ and the sensing matrix
  $A$ is then
  \begin{eqnarray}[A_{j,n}] = \frac{1}{\sqrt{m}}\e^{\i n 2\pi (\xi_j - \frac{1}{2})} =
  \frac{1}{\sqrt{m}}\e^{2\pi \i n\xi_j} (-1)^n .\end{eqnarray}
  By combining the factor $(-1)^n$ into $X_n$ and writing $W_n=(-1)^n X_n$,
  $n=-N/2,\ldots,N/2-1$, we have
  $\norm{\tilde{W}-W}_2=\norm{\tilde{X}-X}_2$, $\norm{W_s-W}_1=\norm{X_s-X}_1$, where
  $\tilde{W}$ and $W_s$ are defined in the same manner.

  Under the assumption of the matrix $A$, we have the estimate
  \begin{eqnarray}\norm{\tilde{X}-X}_2\leq C_0 \frac{1}{\sqrt{s}}\norm{X_s-X}_1 + C_1\epsilon\end{eqnarray}
  for a desired sparsity level $s$ for some constants $C_0,C_1$
   with probability at least $1-\eta$. On the other hand,
  \begin{eqnarray}
   \norm{\tilde{X}-X}_2^2 = \sum_{n=-n_0}^{n_0}
    \abs{\tilde{u_n}\e^{\i k \beta_n z_0}\sqrt{m} - u_n \e^{\i k \beta_n z_0}\sqrt{m}}^2
    +
    \sum_{n\in\Lambda}
    \abs{ \tilde{X}_n - X_n}^2
   \end{eqnarray}
   where $\Lambda\triangleq\{-N/2,\ldots,-n_0+1,n_0+1,N/2-1\}$.
   Moreover, for $\abs{n}\leq n_0$ we have $0<\abs{\e^{-C_e}}\leq \abs{\e^{\i k \beta_n z_0}} \leq 1$, which gives
   \begin{eqnarray}
   \norm{\tilde{X}-X}_2^2 \geq m\abs{\e^{-C_e}}^2\sum_{n=-n_0}^{n_0}
    \abs{\tilde{u_n} - u_n }^2 + 0 = \frac{m}{\e^{2C_e}} \norm{\tilde{\vec{u}} - \vec{u} }_2 ^2
   \end{eqnarray}
   where $\vec{u}=(u_{-n_0},\ldots,u_{n_0})$ and
   $\tilde{\vec{u}}=(\tilde{u_{-n_0}},\ldots,\tilde{u_{n_0}})$.
   Combining these inequalities, we have the estimate that one can reconstruct $\tilde{\vec{u}}$
   with
   \begin{eqnarray}
    \norm{\tilde{\vec{u}}- \vec{u} }_2
    \leq \frac{\e^{C_e}}{\sqrt{m}}\left( C_0 \frac{1}{\sqrt{s}}\norm{X_s-X}_1 + C_1\epsilon  \right).
   \end{eqnarray}
   for some constants $C_0,C_1$
   with probability at least $1-\eta$.

\end{proof}

  \section{Compressibility of the angular spectrum}
  Let us now analyze the compressibility of coefficients $\{u_n\}$.
  We present a heuristic  argument suggesting that 
  the angular spectrum of the scattered field is sparse
  for shallow corrugations. 


Assuming  the validity of  the Rayleigh Hypothesis we have  \begin{eqnarray}
  -u^{\text{inc}}(x,h(x)) = u(x, h(x))=
  \sum_{n\in\mathbb{Z}} u_n \e^{\i k (\alpha_n x + \beta_n h(x))} , \label{etbc}
  \end{eqnarray}
  or equivalently
  \begin{eqnarray}-\e^{-\i k h(x)\sin\theta} = \sum_{n\in\mathbb{Z}}u_n \e^{\i k \beta_n h(x)}\e^{\i n z} .\end{eqnarray}
  For sufficiently flat and smooth surface $h$
  the nearly normal incidence  $\theta\approx\frac{\pi}{2}$ tends to produce nearly  specular diffracted wave \cite{RGC81}
  and hence
  $\{u_n\}$ is
  concentrated at $n=0$. This observation suggests that
  it may be reasonable to approximate the outgoing wavevector $k\beta_n$ by
  the negative incoming wavevector $k\beta_0$,
  or equivalently, to replace $\beta_n$ by $\beta_0=\sin\theta$.
  With this approximation, we have
  \begin{eqnarray}\sum_{n\in\mathbb{Z}} u_n\e^{\i n x} \approx \frac{-\e^{-\i k h(x)\sin(\theta)}}{\e^{\i k \beta_0 h(x)}}
  = -1 + 2\i k h(x)\beta_0 + \bigO(k^2|{h}|^2),\end{eqnarray}
 provided that the depth of the corrugation is small compared
 to the wavelength. Hence, we have
    \begin{eqnarray} u_n\approx v_n \triangleq \begin{cases}
       -1 & n = 0 \\ 2\i k \widehat{h}_n \beta_0 & n\neq 0
      \end{cases} \label{vn} \end{eqnarray}
 which  is sparse by  the sparseness assumption on $\{\widehat{h}_n\}$.
Let  $W= \sqrt{m} (v_n\e^{\i k \beta_n z_0})$. In view of \eqref{1.7} 
we have the estimate 
   \begin{eqnarray}
    \norm{X - X_s}_1 \approx \norm{X - W}_1
    \ = \ \sum_{n=-N/2}^{N/2-1}\sqrt{m} \abs{\e^{\i k \beta_n z_0}}\abs{u_n - v_n}
    \ \leq \ \sqrt{m}\epsilon_u
   \end{eqnarray}
   where $\epsilon_u \triangleq  \sum_{n=-N/2}^{N/2-1}\abs{u_n - v_n}$.
  The subsequent  numerical simulation shows that $u_n$ and $v_n$
   given in \eqref{vn} are indeed close to each other
   when the Rayleigh hypothesis is valid.

  \section{Numerical simulation}
  \subsection{Data synthesis}

We compute  the scattered  field $u(x,z_0)$ by the boundary integral method \cite{DEHM98,MACK00}.
  The scattered wave can be  represented by the Brakhage-Werner type ansatz, i.e.\ the representation via mixed  single-layer ($S$)  and double-layer ($K$) potentials
   \begin{eqnarray}u = (K-\i\eta S)\varphi \ \text{ on } \ \Omega_\Gamma\end{eqnarray}
with  a mixed layer density $\psi$ for
   a constant $\eta>0$ which can be adjusted to improve the condition number of the system.
   Explicitly, we can write
      \begin{eqnarray}
   \lefteqn{ u(x,z)}\label{33}\\
    &= &\int_{-\pi}^{\pi} \left( \frac{\partial }{\partial \nu'}\Phi \big((x,z) ,(x,h(x')) \big)
    -\i\eta \Phi \big((x,z) ,(x,h(x')) \big) \right)
     \psi(x') \cdot\sqrt{1+\dot{h}^2(x')}\d x'.\nonumber
   \end{eqnarray}
   %
Taking the limit $z\to h(x)^+$ and using the  properties
of single and double layer potentials, we obtain  the boundary integral equation  \cite{CK83,DEHM98}
\begin{eqnarray}
 -u^{\text{inc}}(x,h(x))
  &= & \frac{1}{2}\psi(x)  + \int_{-\pi}^{\pi} \left(
  \frac{\partial }{\partial \nu'}\Phi \big((x,h(x)) ,(x',h(x')) \big)\right. \nonumber\\
  &&\left. \hspace{2cm}
    -\i\eta \Phi \big((x,h(x)) ,(x',h(x')) \big) \phantom{\bigg|} \right)  \psi(x')
 \sqrt{1+\dot{h}^2(x')}
  \d x' \label{IntegralEq}\label{23}
  \end{eqnarray}
  (see Appendix). 
    Note that the integral in \eqref{IntegralEq} has weakly singular kernel, and the
    integral exists as an improper integral since the periodic Green's function
    \begin{eqnarray}\Phi (\vec{r},\vec{r}')
    =\frac{\i}{4}\sum_{n\in\mathbb{Z}}\e^{2\pi\i n k \cos\theta}
    H_0^{(1)}(k\abs{\vec{r}-\vec{r}'-2\pi n(1,0) })
    ,\ x,x'\in[-\pi,\pi)\end{eqnarray} has
    the same singularity as
    $H_0^{(1)}(x)\approx 1 + \frac{2\i}{\pi}(\ln\frac{x}{2}+\gamma)$
    where $\gamma\approx 0.5772$ is the Euler-Mascheroni constant.
   Moreover, 
    \begin{eqnarray}\frac{\partial }{\partial \nu'}H_0^{(1)}(k\abs{\vec{r}-\vec{r}'})
    =-k H_1^{(1)}(k\abs{\vec{r}-\vec{r}'})
    \frac{(\vec{r}-\vec{r}')\cdot\nu(\vec{r}')}{\abs{\vec{r}-\vec{r}'}}\end{eqnarray}
    converges to a finite limit (a curvature-like term w.r.t.\ the
    boundary) as $\vec{r}\to\vec{r}'$ \cite{CK98}
   implying the boundedness  of $\frac{\partial }{\partial \nu'}\Phi $ on $\Gamma$.

%

With  $\psi$ solved from \eqref{IntegralEq} and
 the Sommerfeld integral representation 
   \begin{eqnarray}
  \label{somm}
H^{(1)}_0(k|\vec{r}|)=\frac{1}{\pi}
  \int \e^{\i k (|z|\beta+x\alpha)} \frac{\d\alpha}{\beta},  \end{eqnarray}
  where 
  \begin{eqnarray}
  \label{31}
  \beta=\left\{\begin{array}{ll}
  \sqrt{1-\alpha^2}, & |\alpha|<1\\
  \i\sqrt{\alpha^2-1},& |\alpha|> 1
  \end{array}
  \right.
  \end{eqnarray}
we obtain  from \eqref{33}  the outgoing wave expansion
for the scattered field
   \begin{eqnarray}
  u(x,z)
   =\sum_{n\in\mathbb{Z}}\e^{\i k (\alpha_n x + \beta_n z)}
   \left(\frac{1}{4\pi}\int_{-\pi}^{\pi}
   \e^{-\i k\big(\alpha_n x' + \beta_n h(x')\big)} g_n(x')
   \psi(x')\d x' \right)   ,\quad   z> h_{\rm max} \label{25}
   \end{eqnarray}
with 
   \begin{eqnarray}
   g_n(x') = k - k\dot{h}(x')\frac{\alpha_n}{\beta_n} +\frac{\eta}{\beta_n} {\sqrt{1+\dot{h}^2(x')}}.\label{28}
   \end{eqnarray}
 Comparing \eqref{25} with 
   \eqref{1.4} we arrive at  the expression 
   \begin{eqnarray}
   u_n &= \frac{1}{4\pi}\int_{-\pi}^{\pi}\e^{-\i k\big(\alpha_n x' + \beta_n h(x')\big)}
   g_n(x') \psi(x')\d x'  \label{26}
   \end{eqnarray}
   relating the angular spectrum of the scattered field to
   the mixed layer density $\psi$.

\eqref{23} and \eqref{26} motivates the following
iterative reconstruction scheme. Given $h^{(m)}$, $m=1,2,3,\ldots$
first solve for $\psi^{(m)}$ from  
\begin{eqnarray}
 -u^{\text{inc}}(x,h^{(m)}(x))
  &= & \frac{1}{2}\psi^{(m)}(x)  + \int_{-\pi}^{\pi} \left(
  \frac{\partial }{\partial \nu'}\Phi \big((x,h^{(m)}(x)) ,(x',h^{(m)}(x')) \big)\right. \nonumber\\
  &&
    -\i\eta \Phi \big((x,h^{(m)}(x)) ,(x',h^{(m)}(x')) \big)  \Big)  \psi^{(m)}(x')
 \sqrt{1+|\dot{h}^{(m)}|^2(x')}
  \d x' \label{29}
  \end{eqnarray}
  and then solve for $h^{(m+1)}$ from 
     \begin{eqnarray}
   u_n &=& \frac{1}{4\pi}\int_{-\pi}^{\pi}\e^{-\i k\big(\alpha_n x' + \beta_n h^{(m)}(x')\big)}
   g^{(m+1)}_n(x') \psi^{(m)}(x')\d x',\quad n\in \mathbb{Z} \label{30} \\
   g^{(m+1)}_n(x') &= &k - k\dot{h}^{(m+1)}(x')\frac{\alpha_n}{\beta_n}+
   \frac{\eta}{\beta_n}{\sqrt{1+|\dot{h}^{(m)}|^2(x')}}. 
\end{eqnarray}
Note that both \eqref{29} and \eqref{30} are linear equations. 

A natural candidate for the initial guess of  the above iteration is
the one obtained under the Rayleigh hypothesis that the validity
of \eqref{25} is extended to the region $z> h(x)$. Specifically, we
extend  \eqref{25} all the way to boundary and study the nonlinear equation \eqref{etbc}. Indeed, this alone
produces excellent results for shallow corrugations and will
be the focus of the following numerical experiments. 
Scattering and imaging of shallow corrugations can also
be treated by assuming the Born approximation \cite{DPT}.

  In our numerical simulations, we set $128$ nodes to solve the
  boundary integral equation \eqref{23}  by the Nystr\"om method, with $\eta = 1$. Figure \ref{fig3} shows two examples of the  computed scattered field.  
We define  $R_a(h)\triangleq \max_n 2 |n\widehat{h}_n|$ for a rough metric of the validity of the Rayleigh hypothesis.

   \begin{figure}\centering
        \includegraphics[width=3in]{./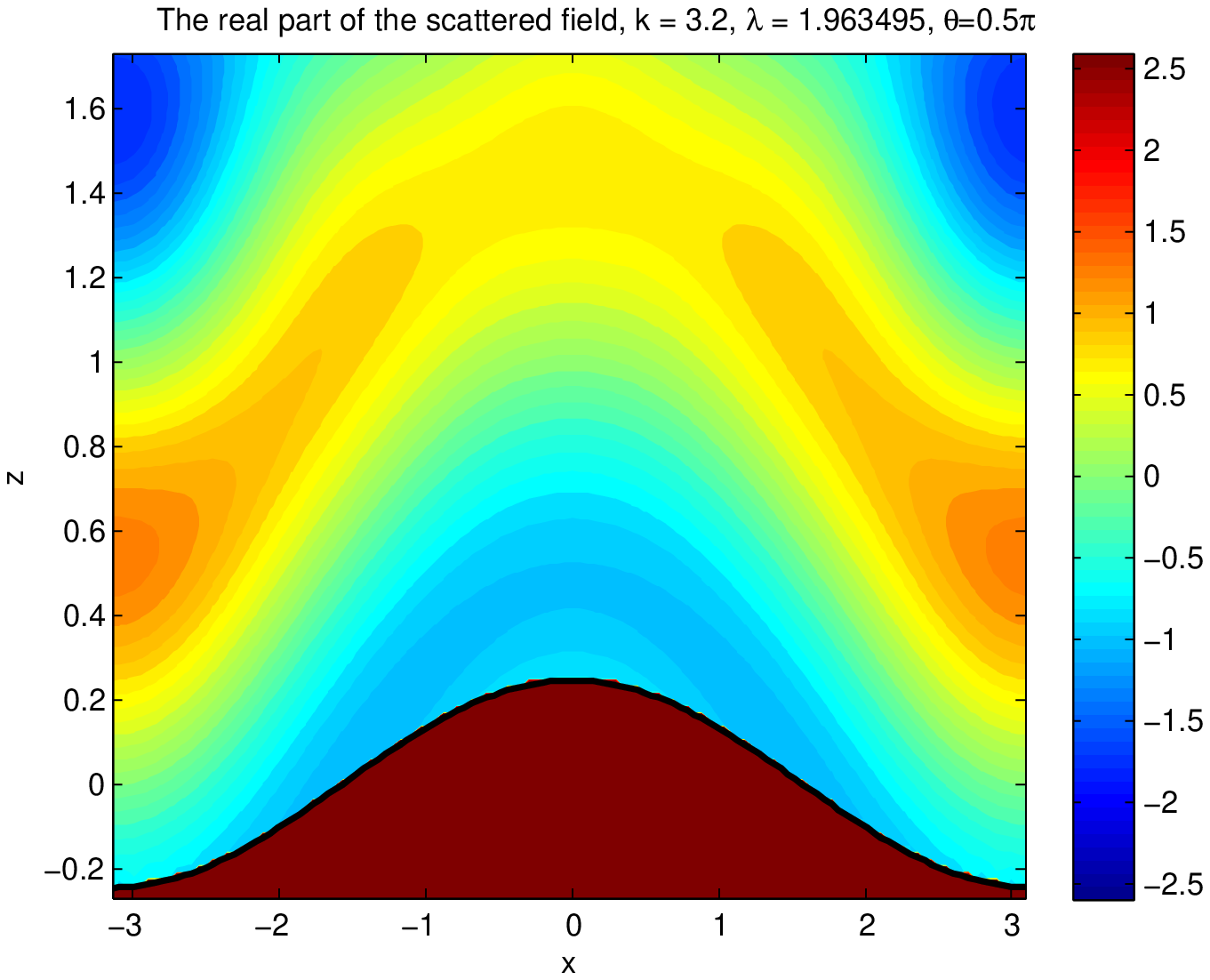}
     \includegraphics[width=3in]{./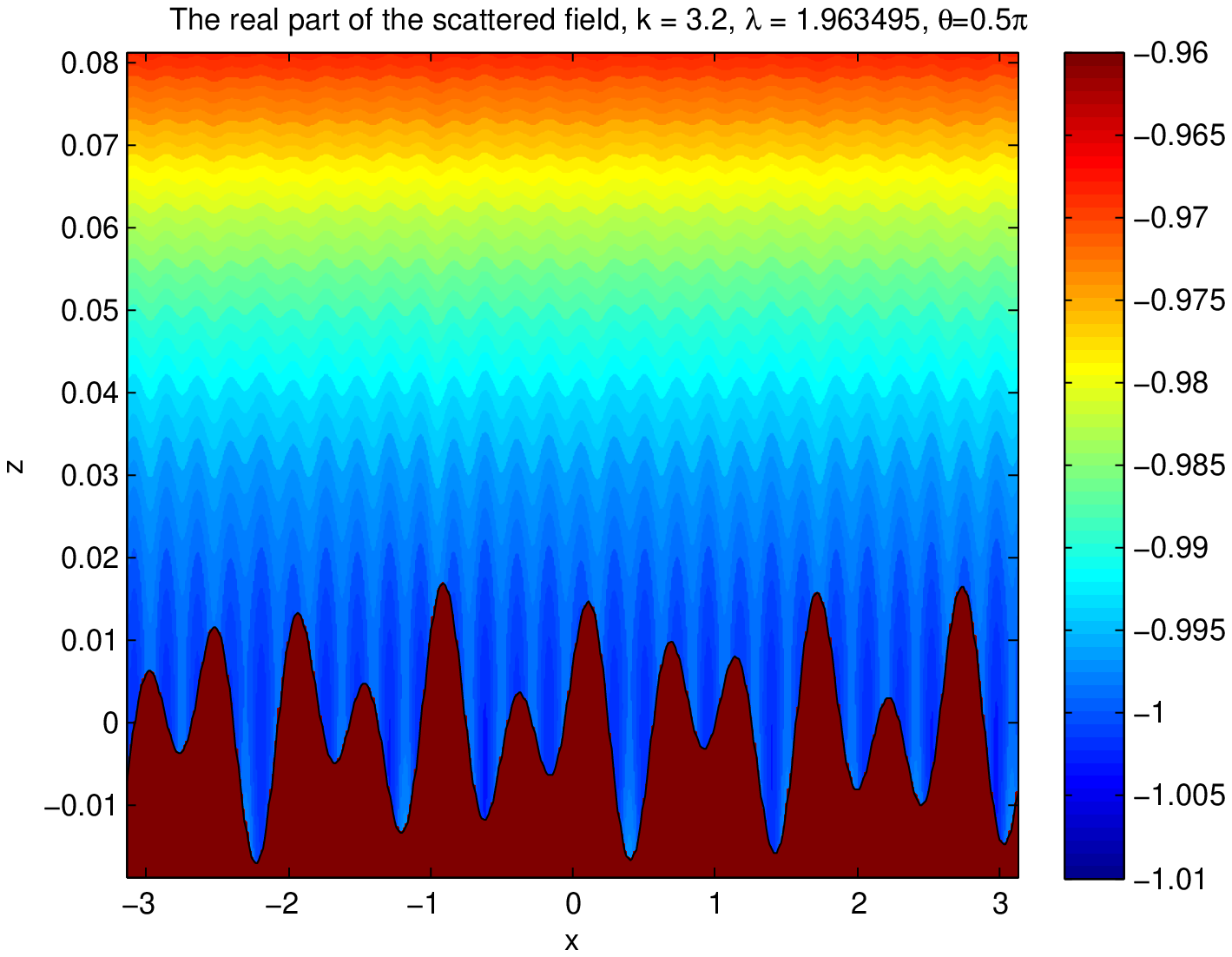}\\
     \caption{The real part of the scattered field induced by the profile
     $h(x)=0.2454\sin(x)$ (left)
    and the profile  $h(x)=0.01\sin(12x)+0.007\cos(7x)$ (right)
   with $L=2\pi$  and vertical incident wave.}
     \label{fig3}
  \end{figure}

  \subsection{Surface reconstruction}
 Solve for $h(x)$ from \eqref{etbc}, we consider  the following three algorithms: the first
  two  are pointwise matching schemes and the third
  is a global fitting scheme. 

  \begin{enumerate}
   \item Point-wise, fixed-point iteration for $h(x), \forall x\in [-\pi, \pi)$. A fixed point iteration algorithm was introduced in
     \cite{GV93,RGC81} and is described below. The initial condition $h^{[0]}(x)$
     is chosen in the following way. For $\theta\approx\frac{\pi}{2}$
 the angular spectrum   $\{u_n\}$ is
   concentrated at $n=0$. Substituting  $\beta_n$ by $1$ in \eqref{etbc} yields 
  \begin{eqnarray}
  \label{22}
  \e^{\i k h(x)}\cdot\sum_{n\in\mathbb{Z}}u_n
   \e^{\i k \alpha_n x} = -\e^{\i k \big(x\cos\theta - h(x)\sin\theta\big)}.
   \end{eqnarray}
 One solves \eqref{22} by  the iterative scheme
   \begin{align}
    h^{[0]}(x) &= \frac{ \ln \left(-\sum_{n\in\mathbb{Z}}u_n
    \e^{\i k\alpha_n x}  \right) }
    {-2\i k} ,  \label{ig} \\
    h^{[n+1]}(x) &= \frac{ \ln \left(-\sum_{n\in\mathbb{Z}}u_n
    \e^{\i k\big(\alpha_n x + (\beta_n-1)h^{[n]}(x)\big)}  \right) }
    {-2\i k }
   \end{align}
   for  $ n=1,2,\ldots $ and all $x\in[-\pi,\pi)$.
   \item Newton's method. From \eqref{etbc}, for each $x\in [-\pi,\pi)$ we set
    \begin{eqnarray} e(h;x) = \e^{\i kx_j\cos\theta}\e^{-\i k h\sin\theta} +
     \sum_{n\in\mathbb{Z}}u_n \e^{\i k \alpha_n x_j} \e^{\i k \beta_n h} ,
     \label{littlee}
    \end{eqnarray}
    and solve $e(h,x)=0$  for $h(x)$ by Newton's method
    \begin{eqnarray}h^{[i+1]} = h^{[i]}
    - \frac{e(h^{[i]};x)}{\frac{\d}{\d h}e(h^{[i]};x)} \end{eqnarray}
    with initial value \eqref{ig}.
   \item Nonlinear least squares fitting. Let
    \begin{eqnarray}F(\vec{a}) = \norm{e( \sum_n a_n\phi_n(\cdot),\cdot)}^2 =
    \sum_j |e(\sum_n a_n\phi_n(x_j),x_j)|^2 ,\end{eqnarray}
    where $e(h;x_j)$ is defined in \eqref{littlee} and
    $\vec{a}=(a_1,a_2,\dots)$ is the vector of the coefficients of
    an expansion of $h$ corresponding to a frame $\{\phi_n\}$, i.e., $h = \sum_n a_n \phi_n$.
    Consider minimizing the nonlinear least square
    \begin{eqnarray}\min_{\vec{a}} F(\vec{a}).\end{eqnarray}
    The basis function are chosen to be $\sin(nx)$ and $\cos(nx)$ for $n\in \Pi\subset\mathbb{N}$, where the index
    set $\Pi$ contains those indices $n$ such that $|\tilde{u_n}|$ are relatively large.
    The Matlab subroutine \verb"lsqnonlin" is applied, which is based on the subspace trust region method.
  \end{enumerate}

  \subsection{Examples}
  In the following examples we apply vertical incident wave,
  $\theta=\frac{\pi}{2}$, with the wave number $k=3.2$ (i.e., the wavelength $\lambda\approx 1.9635$).
  After synthesizing $u(x_j,z_0)$ for $j=1,2,\ldots,m$, $1\%$ additive noise, with respect to
  $\|u(\cdot,z_0)\|_2/\sqrt{m}$, is added to the measured scattered field $u$.
The profile functions  $h$ are $L=2\pi$ periodic, defined in one period $[-\pi,\pi)$, and periodically extended into $\mathbb{R}$.
  The
  bound for the exponential factor is set to be $C_e = \log(25)$.
  Yall1 \cite{Yall1} algorithm, a Basis Pursuit solver,
  is applied to solve \eqref{CSsystem} for vector $\tilde{X}$. To avoid exponential amplification of
  small components of $\tilde{X}$, we apply a threshold level $\tau = 10\%\cdot \max_{n\neq 0}
  |\tilde{X}_n|$ and filter
  out the components  below $\tau$, then compute $\tilde{u_n}$ from \eqref{Xtoun}.


  In Figures \ref{fig4}-\ref{fig10},  the right panels show  the exact profile $h(x)$
  (black solid line), and the reconstruction under the three algorithms: Newton's method ``\verb"Newton"'',
  fixed point iteration ``\verb"Fixed pt iter"'', and nonlinear least squares fitting
  ``\verb"NLS fit"''. The length of the black strip on the top
  of each plot  indicates  the
  wavelength, and its height indicates the vertical coordinate $z_0$ of the sampling points. The left and middle panels show
the real (left) and imaginary parts (middle)
 the angular spectrum $u_n$ (blue crosses), the estimated angular spectrum $\tilde{u_n}$ (red dots), and the theoretical estimate
  $v_n$ (green circles). Note  the different scales for the
  real and imaginary parts in Figures \ref{fig4}-\ref{fig7} where
  $R_a$ is relatively small. This is no longer the case
  in Figures \ref{fig9} and \ref{fig10} for which the Rayleigh
  hypothesis is known to be false.


  \begin{figure}\centering
    \includegraphics[width=3.2in]{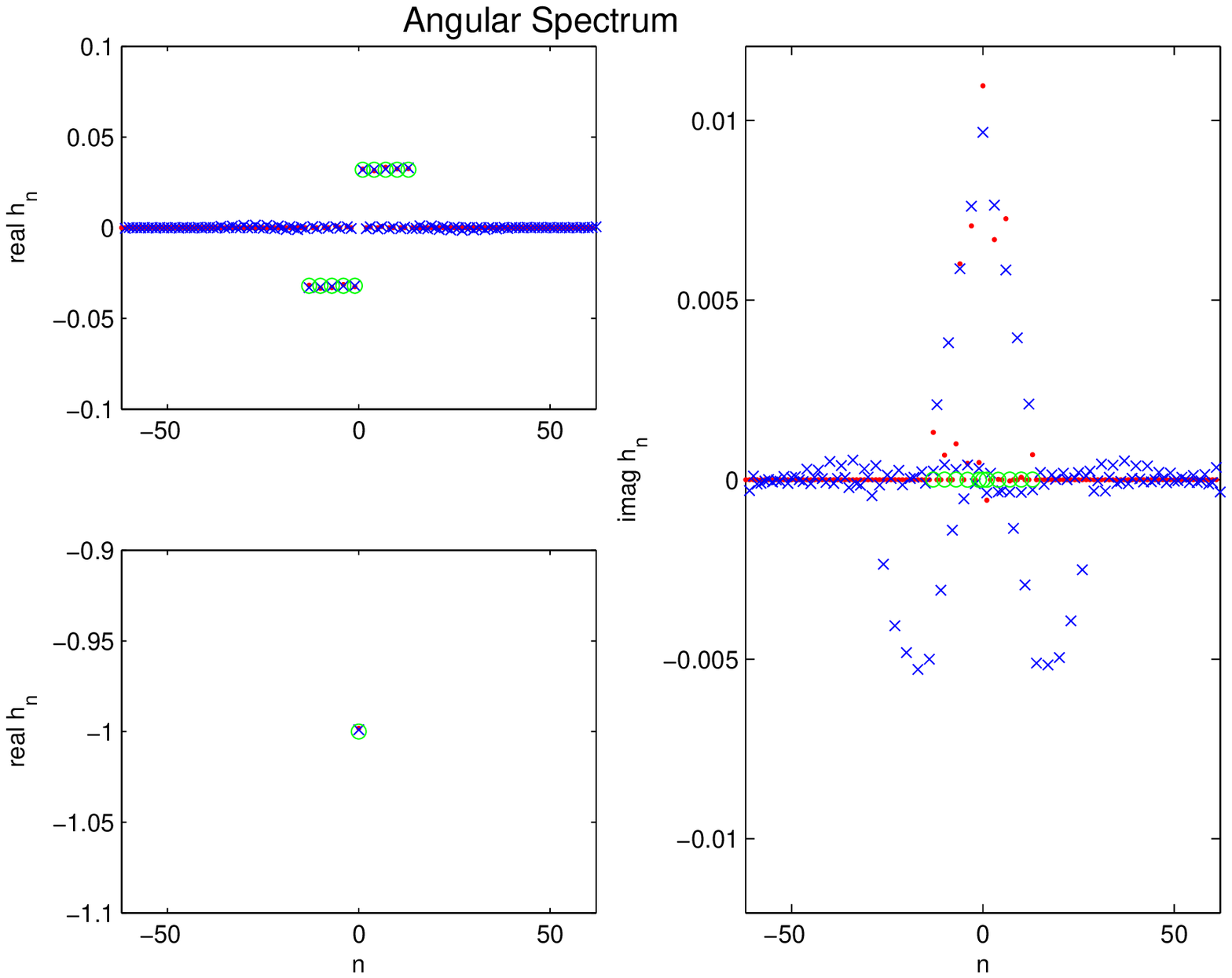}\
      \includegraphics[width=3.2in]{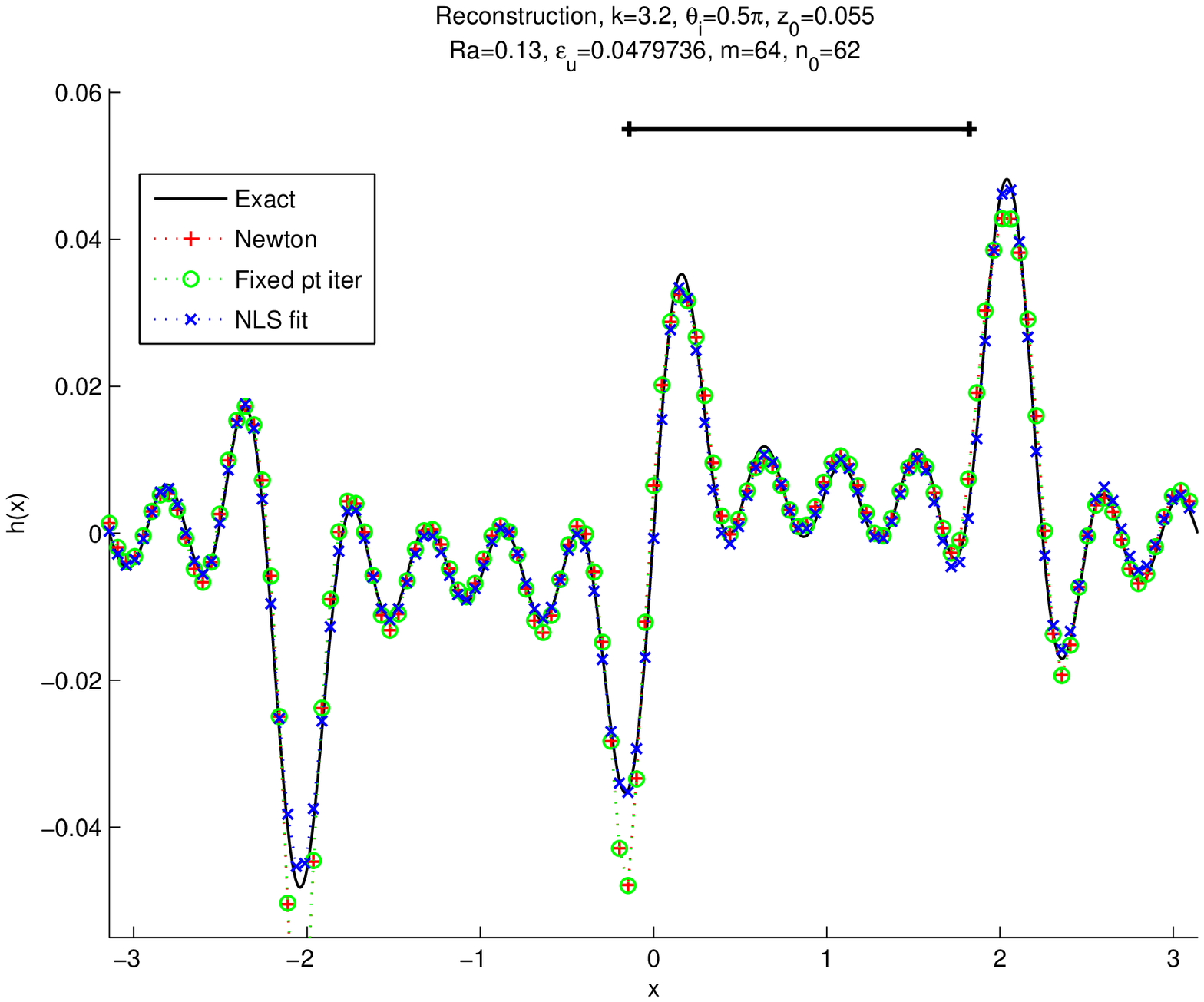} 
    \caption{The profile of 5 Fourier modes $h(x)=0.01\big(\sum_{p=0}^4 \sin((1+3p)x)\big)$ and the reconstructions.}
    \label{fig:ST1}
    \label{fig4}
  \end{figure}


  \begin{figure}\centering
   \includegraphics[width=3.2in]{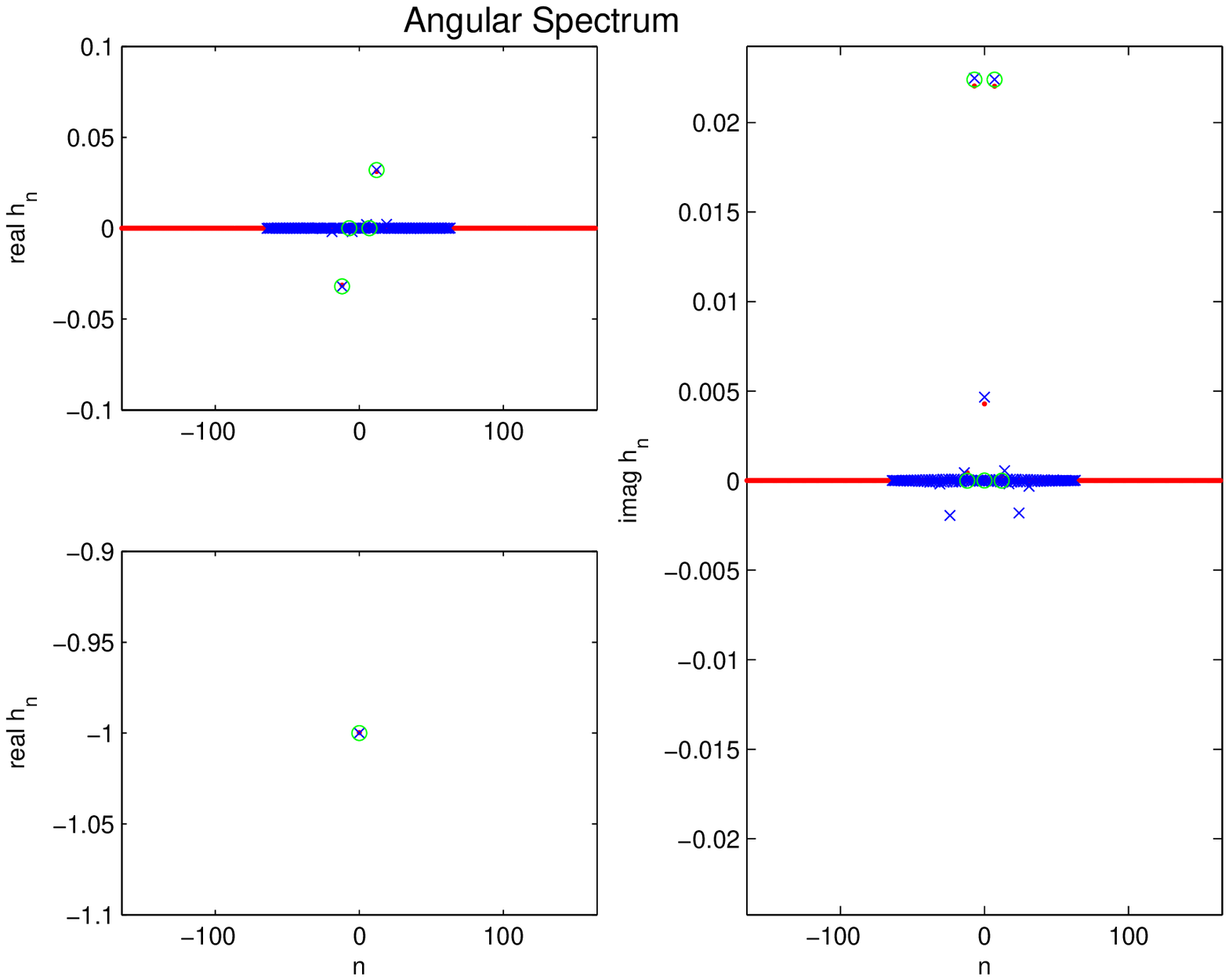}
    \includegraphics[width=3.2in]{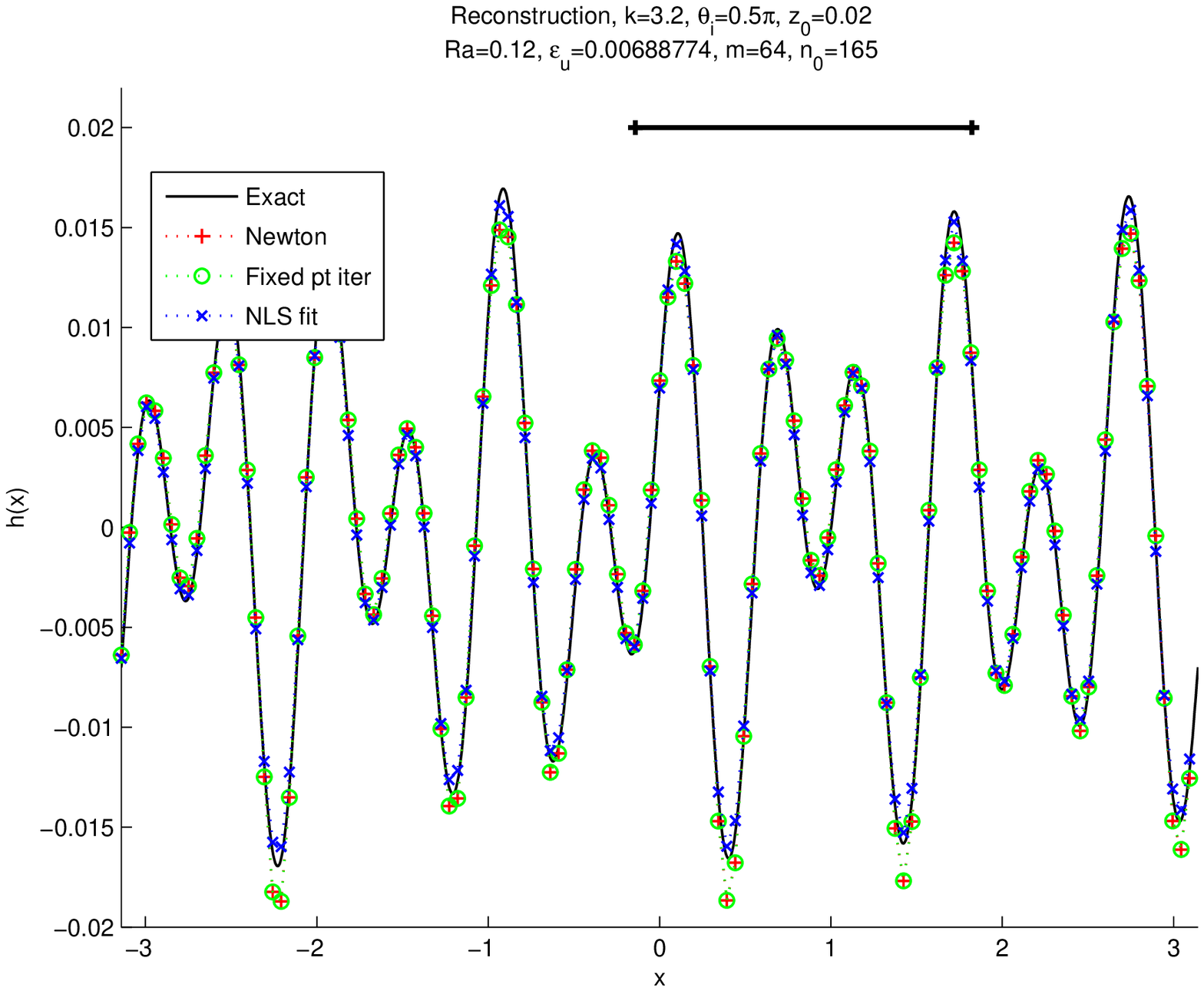} 
    \caption{Profile of two Fourier modes  $h(x)=0.01\sin(12x)+0.007\cos(7x)$ and the reconstructions.}
    \label{fig:ST2}
    \label{fig5}
  \end{figure}


  \begin{figure}\centering
    \includegraphics[width=3.2in]{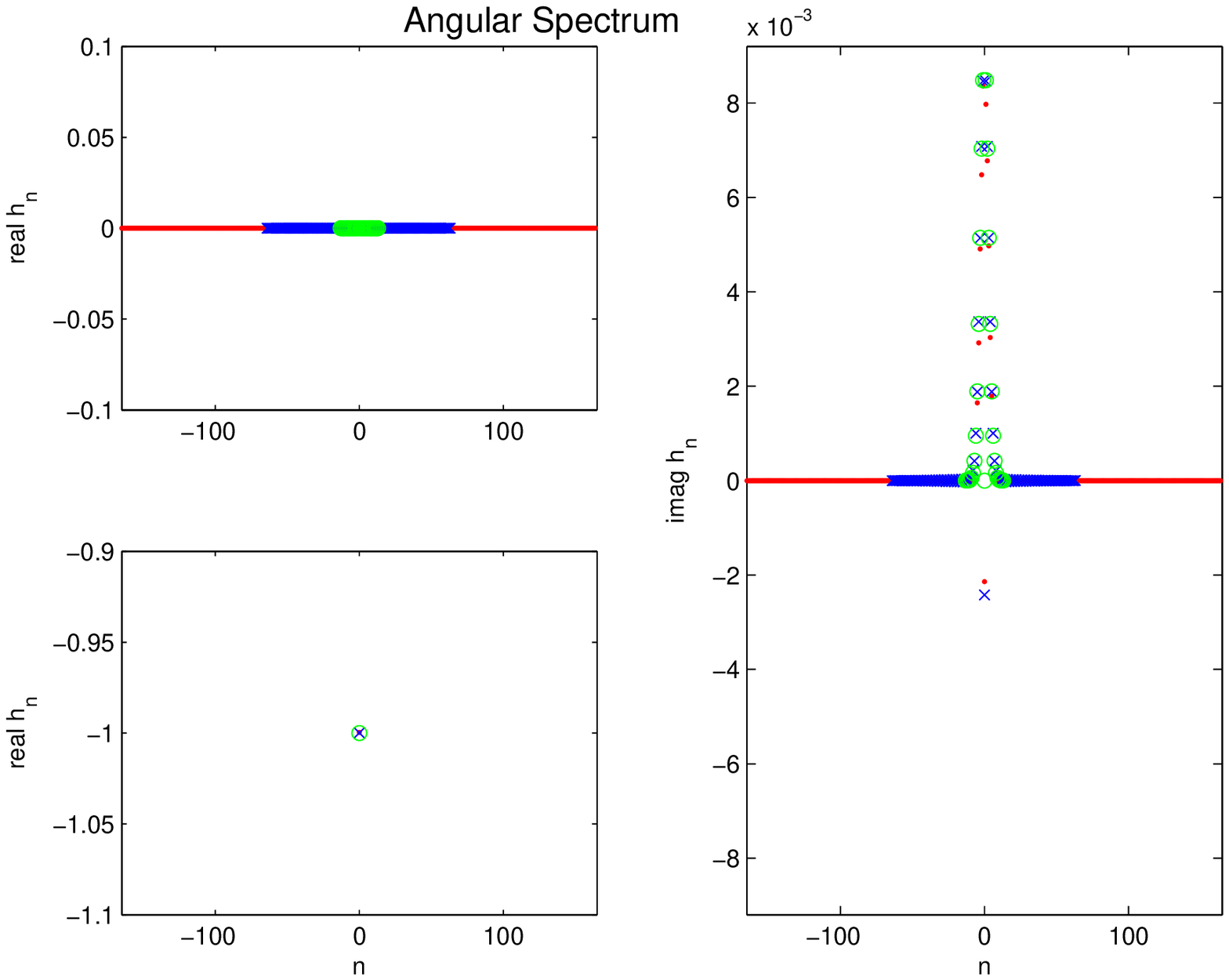}
      \includegraphics[width=3.2in]{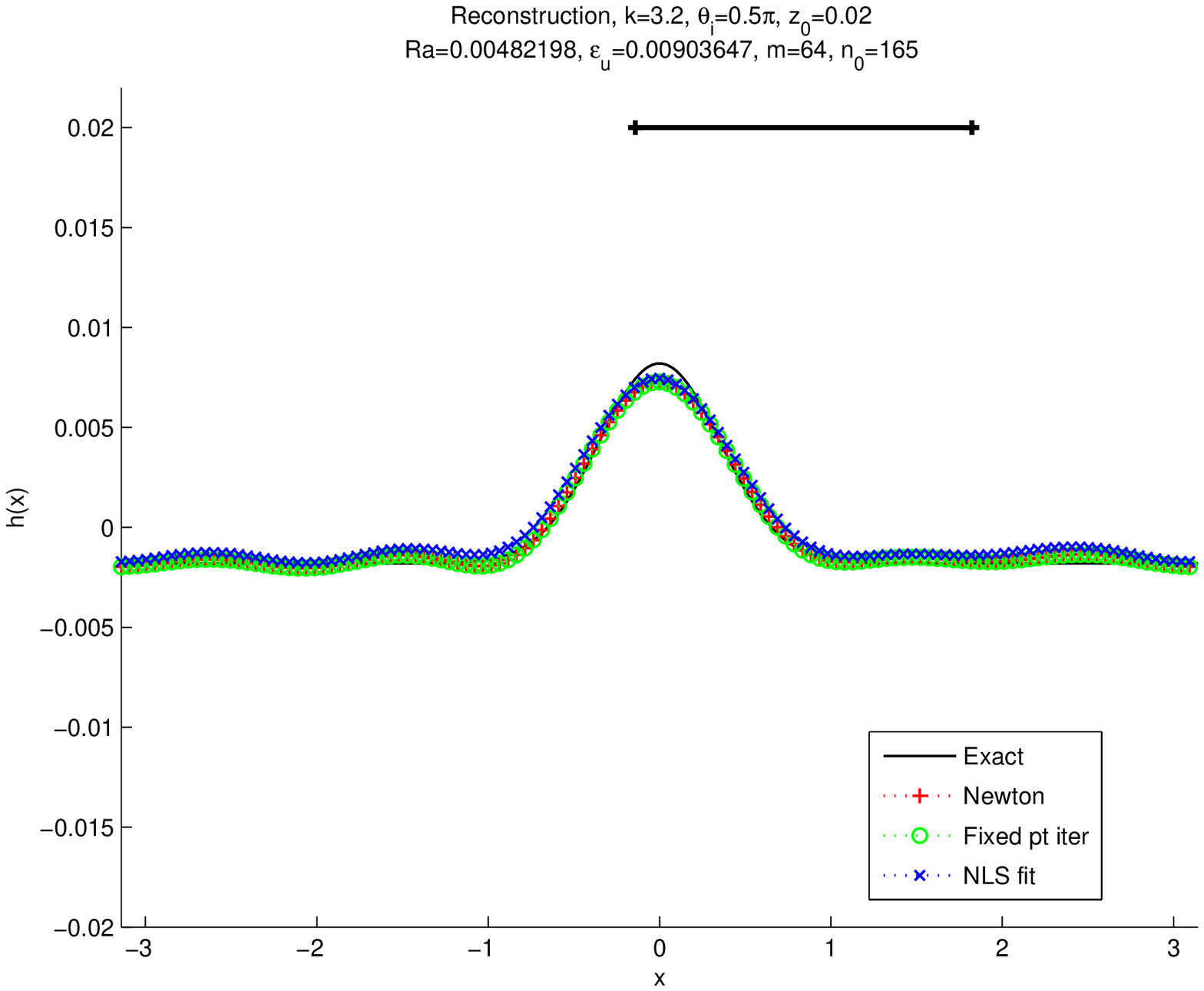}
    \caption{Periodized Gaussian
    $h(x)= b\big(\e^{-(ax)^2} -  \frac{\erf(a\pi)}{\pi\sqrt{\pi}} \big)\cdot\tilde{\chi}_{[-0.9\pi,0.9\pi]}(x)$,
    $a=2$, $b=0.01$ and the reconstructions. 
    Here $\tilde{\chi}$ is a smoothed indicator function.}
    \label{fig:PG}
    \label{fig6}
  \end{figure}


  \begin{figure}\centering
      \includegraphics[width=3.2in]{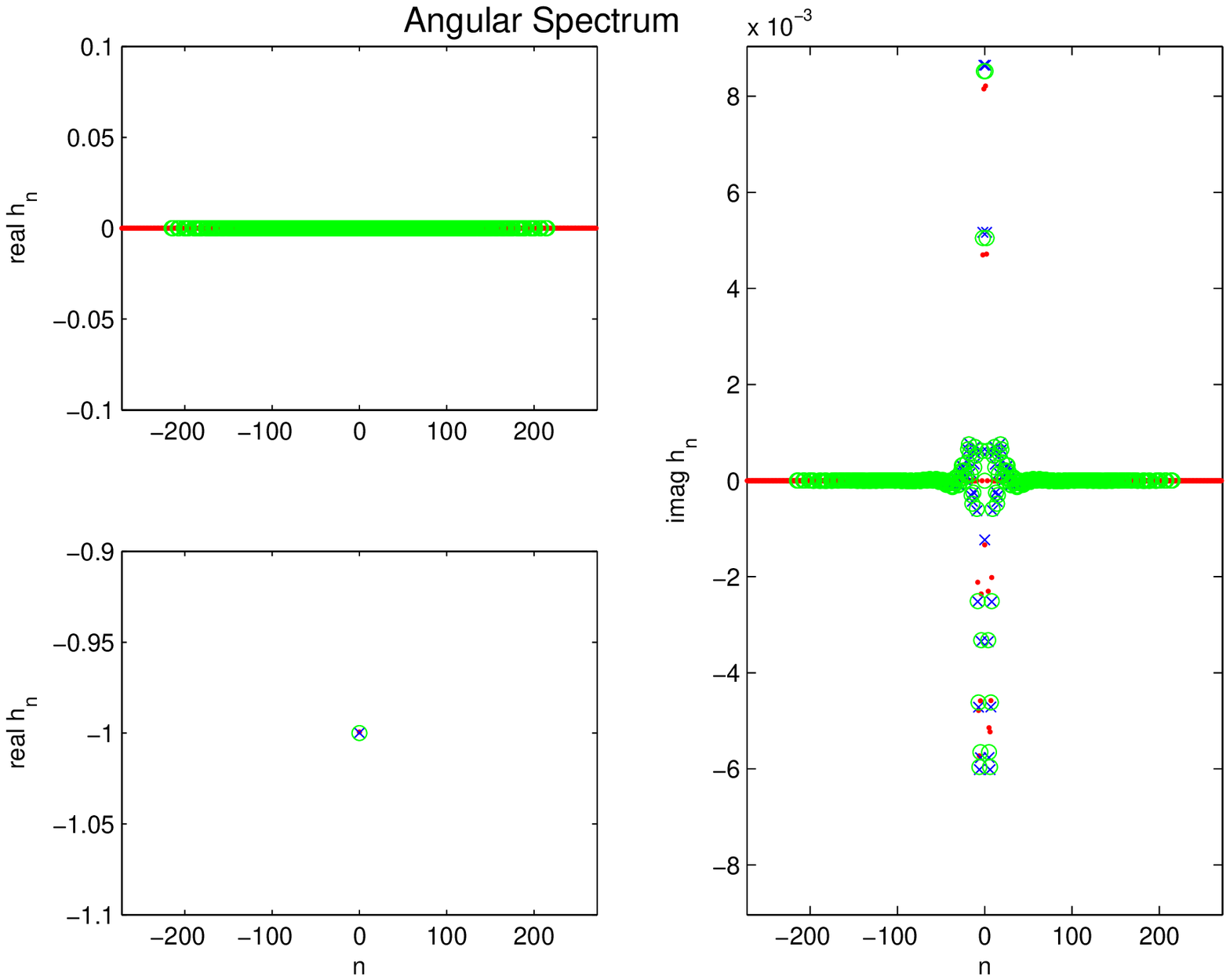}
    \includegraphics[width=3.2in]{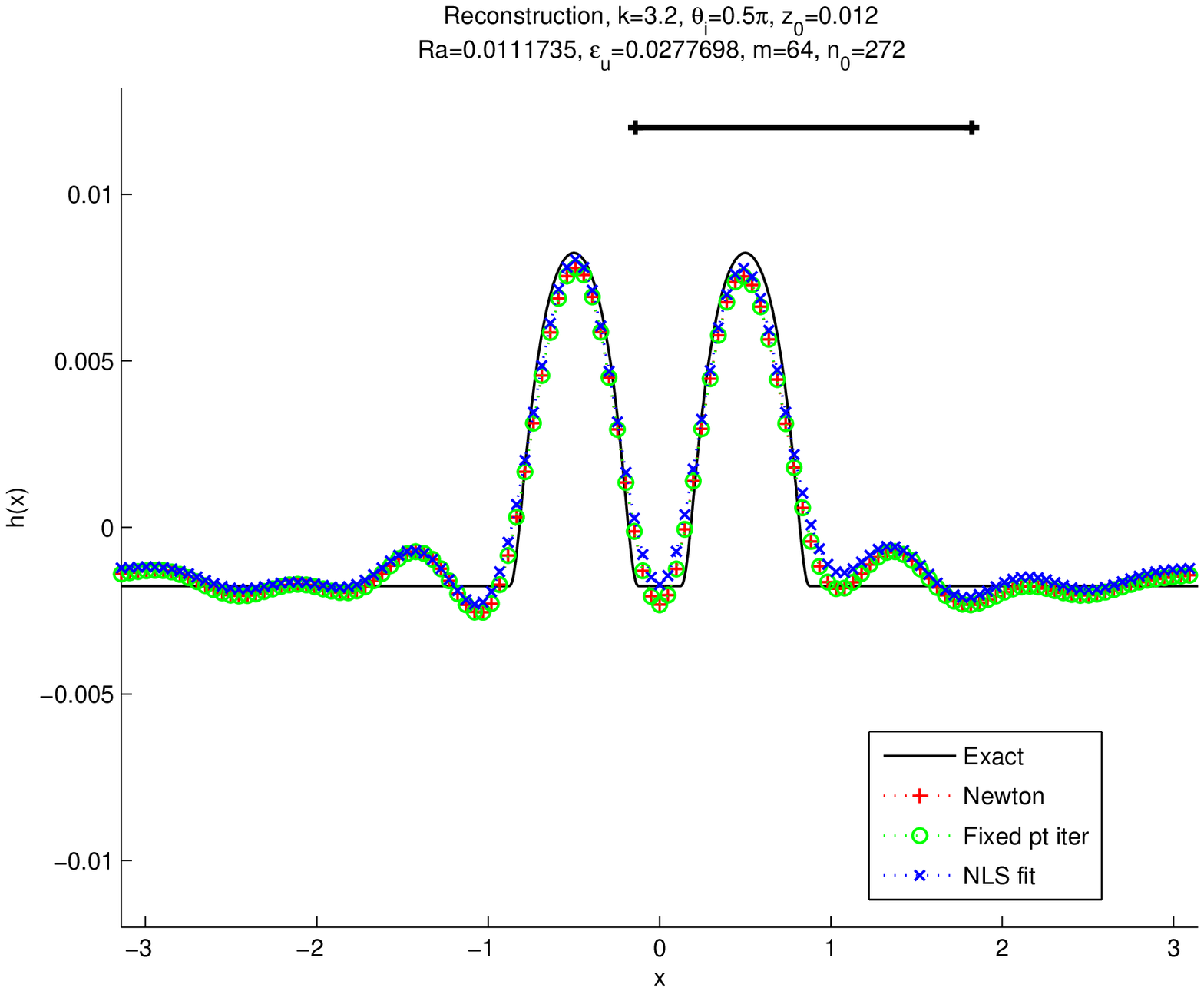}
    \caption{Double subwavelength peaks
    $h(x)= b\big( \zeta(a(x-\frac{1}{2})) + \zeta(a(x+\frac{1}{2})) \big)$, $a=2.5$, $b=0.01$,
    $\zeta(x)=\exp\left(1-\frac{1}{x^2-1}\right)  \chi_{(-1,1)}(x) + c_0$ 
    and the reconstructions. Here the constant
    $c_0$ is chosen such that
    $\widehat{\zeta}_0=0$.}
    \label{fig:Peaks}
    \label{fig7}
  \end{figure}



  \begin{figure}\centering
   \includegraphics[width=3.2in]{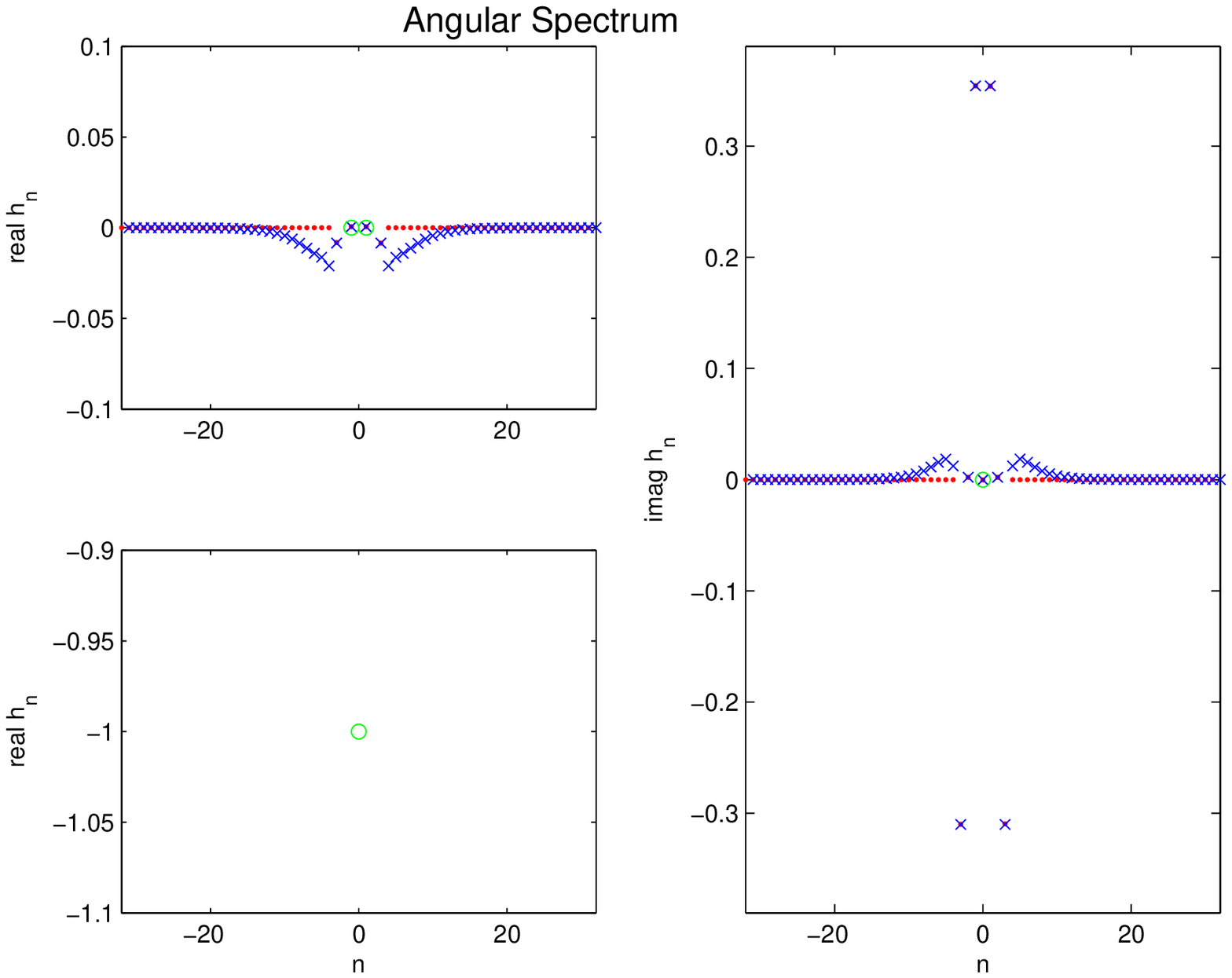}
       \includegraphics[width=3.2in]{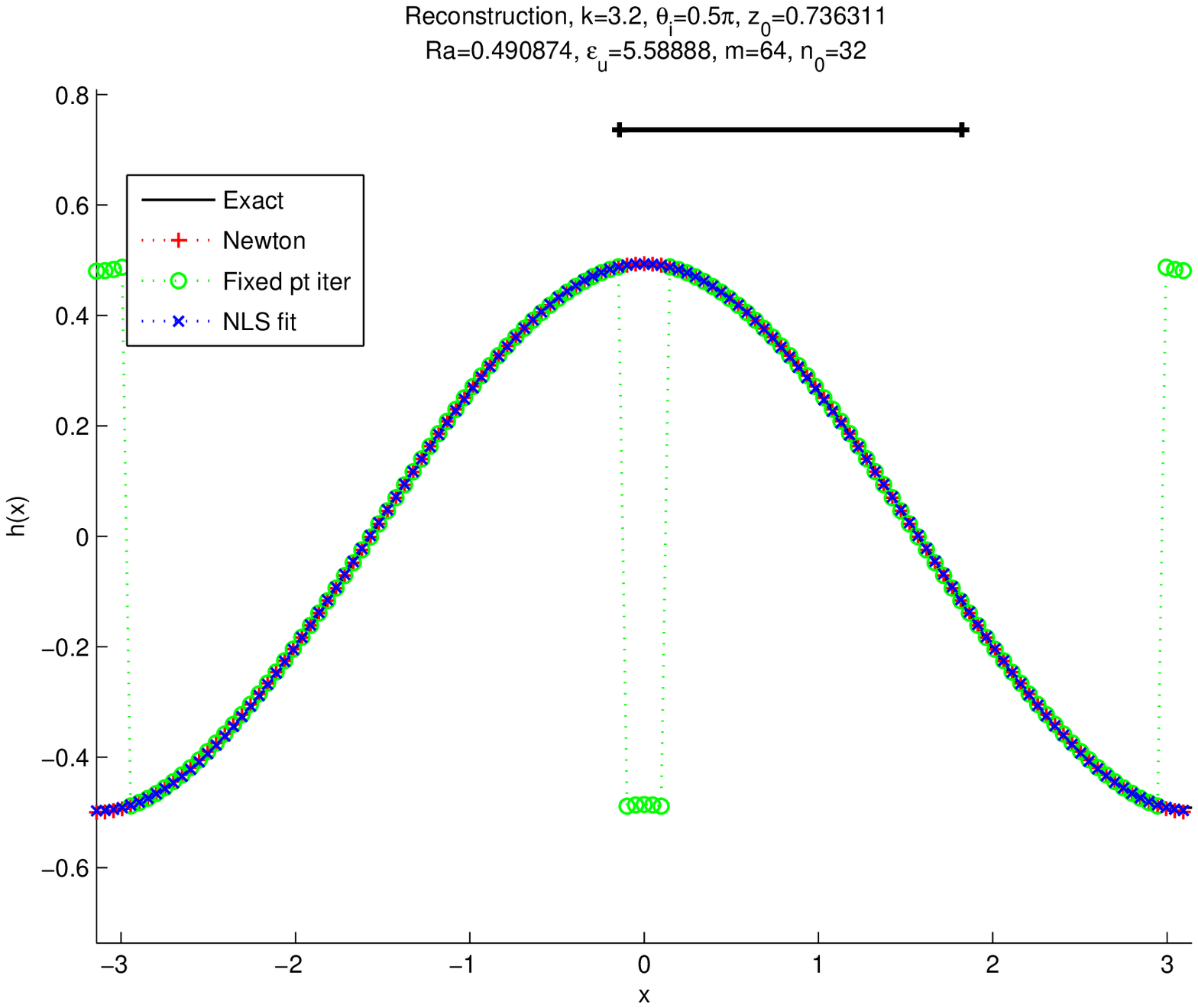}
    \caption{
    $h(x)=0.491\cos(x)$ }
    \label{fig:sin_a1b1}
    \label{fig9}
  \end{figure}

  \begin{figure}\centering
     \includegraphics[width=3.2in]{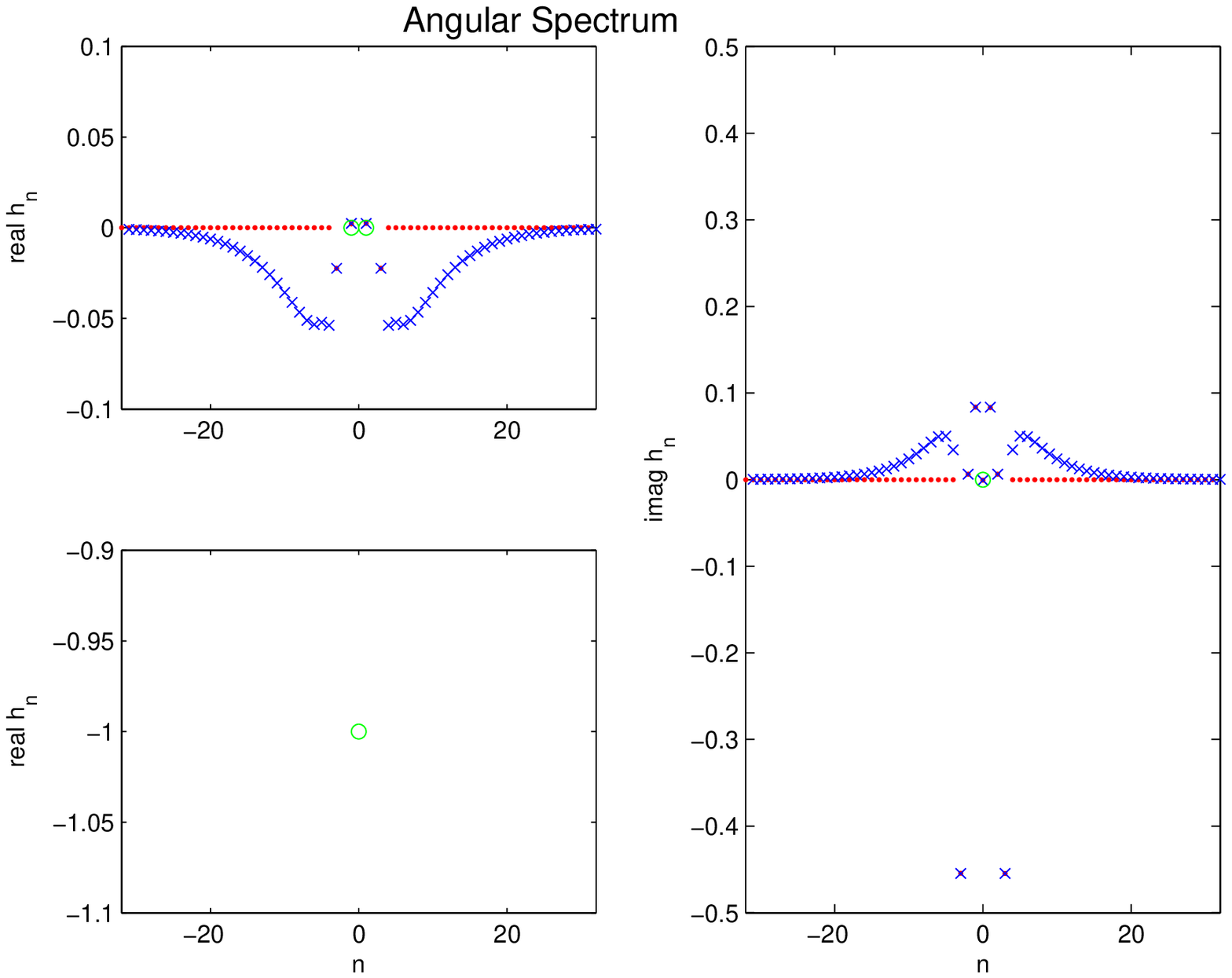}
        \includegraphics[width=3.2in]{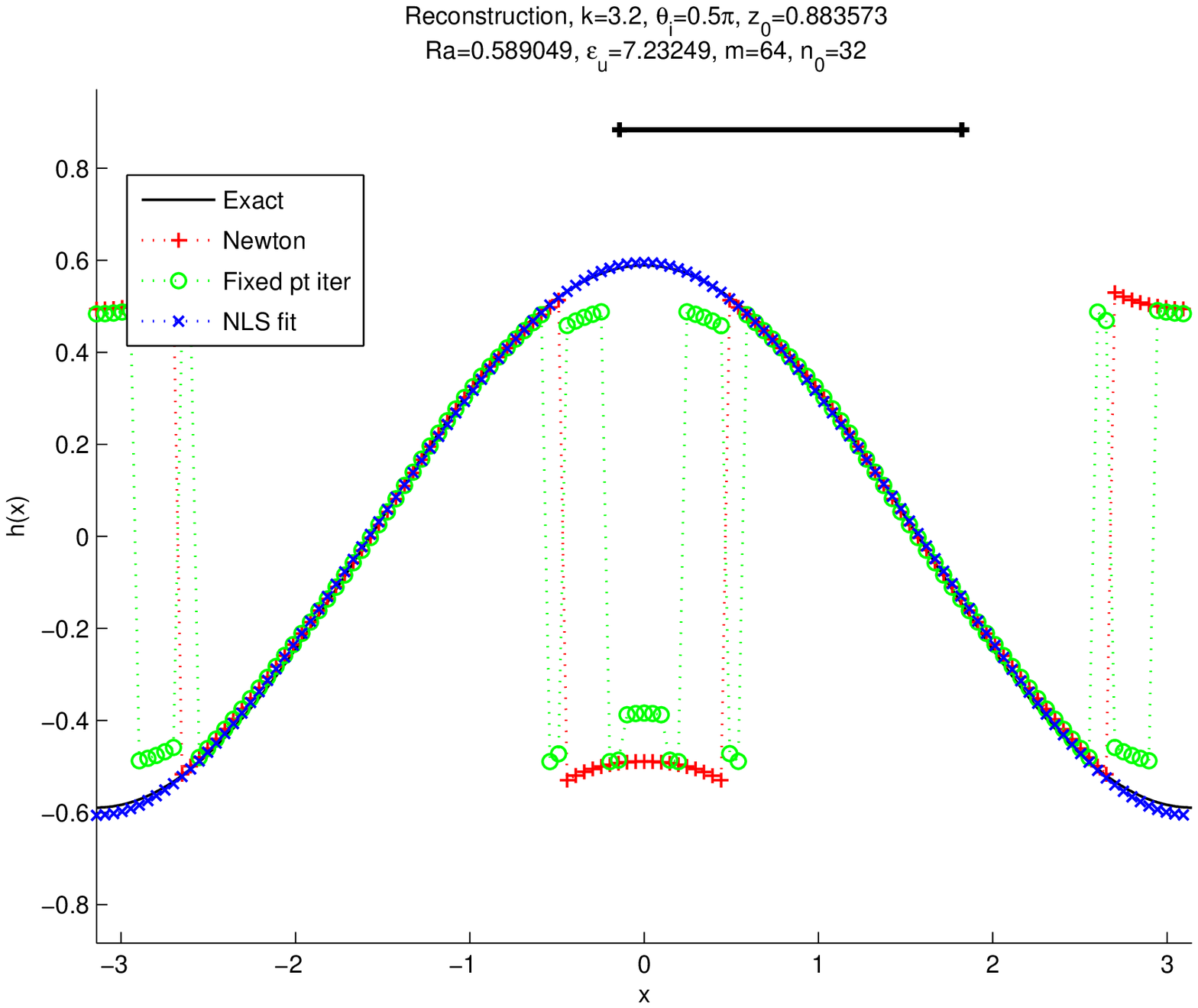}
    \caption{
    $h(x)= 0.589\cos(x)$.}
    \label{fig:sin_a1b12}
    \label{fig10}
  \end{figure}



   Figures \ref{fig4} and \ref{fig5} show the results for
   profiles $h(x)$ with sparse Fourier coefficients.
   The prediction
   $v_n$ captures well the dominant  component $\Re[u_0]$ and so does the sparse reconstruction $\tilde{u_n}$ the  other significant components of the angular spectrum. For reconstruction (right panels)
   the nonlinear least squares is the best performer
while the pointwise iterative
   methods may produce visible  undershoots at the peaks and troughs.

   For the  Gaussian profile (Fig \ref{fig:PG}) and subwavelength double-peaks (Fig \ref{fig:Peaks}), again
   $v_n$ captures well the dominant  component $\Re[u_0]$ and so does the sparse reconstruction $\tilde{u_n}$ most other significant components of the angular spectrum.    
   The angular spectrum for the latter case occupies a wider range of modes than the former case since the two  peaks are sharper than
   the Gaussian.  As a consequence,
   the reconstruction is more accurate in the former case.
   For the latter case, all three reconstructions undershoot
   the peaks and produce  fluctuations at the flat part of the profile.

   Figures  \ref{fig:sin_a1b1} and \ref{fig10} are the results for simple sinusoids  when the Rayleigh hypothesis is known to fail (${ab}>0.448$). The failure of the Rayleigh hypothesis manifests in the 
broadening of the support of the angular spectrum. Furthermore, the imaginary part of
the angular spectrum  
is  order of magnitudes larger than those in Figures \ref{fig4}-\ref{fig7}.    As a result, the angular spectrum $\{u_n\}$ is less compressible  and  not well recovered by the compressed
   sensing techniques.  In 
   both cases,
 the simple prediction $v_n$ fails to
  capture  the  dominant components of the angular spectrum.

   Nevertheless,  the nonlinear least squares fitting provides
   an accurate  reconstruction of the profile in both cases.
   The Newton iteration  converges in Figure \ref{fig9} but
   fails near the peaks and
troughs in Figure \ref{fig10} while
the fixed point iteration fails to converges near the peaks and
troughs in both cases.
When $ab$ is further increased (to, e.g. $0.736$),
then all three methods fail to recover the profile.

   \section{Conclusion}
We have proposed a compressed sensing scheme for near-field imaging of corrugations of relative sparse Fourier components.
The scheme employs  random sparse measurement  of near field
to recover the angular spectrum of the scattered field. We have shown heuristically and numerically that under the Rayleigh hypothesis
the angular spectrum is indeed sparse or compressible and
amenable to compressed sensing techniques.  

 We then
develop  iteration schemes for recovering the  surface profile from the angular
spectrum.  
Specifically, under the Rayleigh hypothesis we 
have tested three iterative schemes. The nonlinear least squares in
the Fourier basis
has the best performance among the three and produces
accurate reconstructions even when the Rayleigh hypothesis
is known to be false. 

The full iteration scheme \eqref{29}-\eqref{30} beyond the limitation of
the Rayleigh hypothesis will require non-sparse measurements 
for the angular spectrum data
and will be studied elsewhere. 

\bigskip

\noindent {\bf Acknowledgement.}   The research supported in part by NSF Grant DMS 0908535. 

\bibliographystyle{amsalpha}

\appendix
\section{Derivation of the boundary integral equation \eqref{23}}
   The term $\frac{1}{2}\psi(x)$ in \eqref{IntegralEq} arises due to the jump discontinuity
    for the double layer potential across the boundary, whereas the single layer potential is continuous.
    More specifically, let 
    \begin{align}
     u_S(\vec{r}) &= \int_{\Gamma} \Phi (\vec{r},\vec{r}') 
     \psi_S(\vec{r}') \d S(\vec{r}') \\
     u_D(\vec{r}) &= \int_\Gamma  \frac{\partial }{\partial \nu'}\Phi (\vec{r},\vec{r}') 
     \psi_D(\vec{r}') \d S(\vec{r}')
    \end{align}
   be  the single and double layer potentials respectively for
   $\vec{r}=(x,z) \in\mathbb{R}^2\setminus\Gamma$. Furthermore we denote
    $\vec{r}^\pm = \vec{r}_0 \pm \rho\nu(\vec{r}_0)$ for some small $\rho>0$ and $\vec{r}_0\in\Gamma$
    (assuming  that the boundary is of class $C^2$ so the representation of $\vec{r}^\pm$ is 
    unique for $\vec{r}^{\pm}$ near the boundary).
    Clearly
    \begin{eqnarray}\lim_{\rho\to 0}u_S(\vec{r}^+) = \lim_{\rho\to 0} u_S(\vec{r}^-) = u_S(\vec{r}_0) . \end{eqnarray}
 On the other hand, write
    \begin{align}
     u_D(\vec{r}^+) &= \psi_D(\vec{r}_0) \int_\Gamma  \frac{\partial }{\partial \nu'}\Phi_0(\vec{r}^+,\vec{r}') 
      \d S(\vec{r}')  
      + v(\vec{r}^+)  \\
     u_D(\vec{r}^-) &= \psi_D(\vec{r}_0) \int_\Gamma  \frac{\partial }{\partial \nu'}\Phi_0(\vec{r}^-,\vec{r}') 
      \d S(\vec{r}')  
      + v(\vec{r}^-) 
    \end{align}
so that  
    \begin{align}
     v(\vec{r}^\pm) &= \int_\Gamma  \frac{\partial }{\partial \nu'}\Phi (\vec{r}^{\pm},\vec{r}') 
     \big( \psi_D(\vec{r}') - \psi_D(\vec{r}_0) \big) \d S(\vec{r}')
   \nonumber  \\ & \quad + \psi_D(\vec{r}_0) \int_\Gamma 
   \left(\frac{\partial }{\partial \nu'} 
   \Phi (\vec{r}^\pm,\vec{r}') - \frac{\partial }{\partial \nu'}\Phi_0(\vec{r}^\pm,\vec{r}') \right) 
     \d S(\vec{r}') \label{24}
    \end{align}
        where $\Phi_0$ is the Green's function for Laplace equation. It is easy to see that integral \eqref{24} is continuous  in the neighborhood of $\rho=0$. 
        
 The jump condition
   \begin{eqnarray}
  \lim_{\rho\to 0} \left(u_D(\vec{r}^+)  - u_D(\vec{r}^-) \right)= \psi_D(\vec{r}_0) 
  \end{eqnarray}
 now  follows from the calculation 
    \begin{align} \int_\Gamma \frac{\partial }{\partial \nu'}\Phi_0(\vec{r}^\pm,\vec{r}') 
      \d S(\vec{r}')
      &= {1\over 2} \int_{\partial  B_\rho(\vec{r}_0)} \frac{\partial }{\partial \nu'}\Phi_0(\vec{r}^\pm,\vec{r}') 
      \d S(\vec{r}')
      \\
      &= \frac{1}{4\pi p}\int_{\partial  B_\rho(\vec{r}_0)} \pm 1
      \d S(\vec{r}')      
     { \longrightarrow} \pm \frac{1}{2},\quad \rho\to 0    \end{align}
by  applying the divergence theorem, integrating over the circle $B_\rho(\vec{r}_0)$ of radius $\rho$  and 
    shrinking radius $\rho$ to 0.

\end{document}